\pgfplotsset{width=7cm,compat=1.3}
\newcommand{\stkout}[1]{\ifmmode\text{\sout{\ensuremath{#1}}}\else\sout{#1}\fi}
\let\OldStatex\Statex
\renewcommand{\Statex}[1][3]{%
	\setlength\@tempdima{\algorithmicindent}%
	\OldStatex\hskip\dimexpr#1\@tempdima\relax}
\newcommand{\shorten}[1]{}
\newtheorem{proposition}{Proposition}
\newtheorem{theorem}{Theorem}
\newcommand{\signed}%
{{\unskip\nobreak\hfill\penalty50
		\hskip2em\hbox{}\nobreak\hfil $\blacksquare$
		\parfillskip=0pt \finalhyphendemerits=0 \par}}
\newenvironment{proof}[1]
{
	\bf{Proof:}\rm{\noindent{#1 }}\ignorespaces
}
{\signed\addvspace\medskipamount}
\begin{document}
%
% paper title
% Titles are generally capitalized except for words such as a, an, and, as,
% at, but, by, for, in, nor, of, on, or, the, to and up, which are usually
% not capitalized unless they are the first or last word of the title.
% Linebreaks \\ can be used within to get better formatting as desired.
% Do not put math or special symbols in the title.
\title{HashTag Erasure Codes: From Theory to Practice}
%
%
% author names and IEEE memberships
% note positions of commas and nonbreaking spaces ( ~ ) LaTeX will not break
% a structure at a ~ so this keeps an author's name from being broken across
% two lines.
% use \thanks{} to gain access to the first footnote area
% a separate \thanks must be used for each paragraph as LaTeX2e's \thanks
% was not built to handle multiple paragraphs
%
%
%\IEEEcompsocitemizethanks is a special \thanks that produces the bulleted
% lists the Computer Society journals use for "first footnote" author
% affiliations. Use \IEEEcompsocthanksitem which works much like \item
% for each affiliation group. When not in compsoc mode,
% \IEEEcompsocitemizethanks becomes like \thanks and
% \IEEEcompsocthanksitem becomes a line break with idention. This
% facilitates dual compilation, although admittedly the differences in the
% desired content of \author between the different types of papers makes a
% one-size-fits-all approach a daunting prospect. For instance, compsoc 
% journal papers have the author affiliations above the "Manuscript
% received ..."  text while in non-compsoc journals this is reversed. Sigh.

\author{Katina~Kralevska,
Danilo~Gligoroski,
Rune E.~Jensen,
and~Harald~{\O}verby% <-this % stops a space
\IEEEcompsocitemizethanks{\IEEEcompsocthanksitem K. Kralevska, D. Gligoroski, and H. {\O}verby are with the Department of Information Security and Communication Technology, NTNU, Norwegian University of Science and Technology, Trondheim 7491, Norway. E-mail: \{katinak, danilog, haraldov\}@ntnu.no\protect
% note need leading \protect in front of \\ to get a newline within \thanks as
% \\ is fragile and will error, could use \hfil\break instead.

\IEEEcompsocthanksitem  R. E. Jensen is with the Department of Computer and Information Science, NTNU, Norwegian University of Science and Technology, Trondheim 7491, Norway. E-mail: runeerle@idi.ntnu.no \protect

}% <-this % stops a space
%\thanks{Manuscript received April 19, 2005; revised August 26, 2015.}
}

\IEEEtitleabstractindextext{%
\begin{abstract}
Minimum-Storage Regenerating (MSR) codes have emerged as a viable alternative to Reed-Solomon (RS) codes as they minimize the repair bandwidth while they are still optimal in terms of reliability and storage overhead. 
Although several MSR constructions exist, so far they have not been practically implemented mainly due to the big number of I/O operations. 
In this paper, we analyze high-rate MDS codes that are simultaneously optimized in terms of storage, reliability, I/O operations, and repair-bandwidth for single and multiple failures of the systematic nodes. The codes were recently introduced in \cite{7463553} without any specific name. Due to the resemblance between the hashtag sign \# and the procedure of the code construction, we call them in this paper \emph{HashTag Erasure Codes (HTECs)}.
HTECs provide the lowest data-read and data-transfer, and thus the lowest repair time for an arbitrary sub-packetization level $\alpha$, where $\alpha \leq r^{\lceil \sfrac{k}{r} \rceil}$, among all existing MDS codes for distributed storage including MSR codes. The repair process is linear and highly parallel. Additionally, we show that HTECs are the first high-rate MDS codes that reduce the repair bandwidth for more than one failure. Practical implementations of HTECs in Hadoop release 3.0.0-alpha2 demonstrate their great potentials.
\end{abstract}

% Note that keywords are not normally used for peerreview papers.
\begin{IEEEkeywords}
Distributed storage, MDS erasure codes, regenerating codes, small sub-packetization level, access-optimal, I/O operations, single and multiple failures, Hadoop.
\end{IEEEkeywords}}

% make the title area
\maketitle

% To allow for easy dual compilation without having to reenter the
% abstract/keywords data, the \IEEEtitleabstractindextext text will
% not be used in maketitle, but will appear (i.e., to be "transported")
% here as \IEEEdisplaynontitleabstractindextext when compsoc mode
% is not selected <OR> if conference mode is selected - because compsoc
% conference papers position the abstract like regular (non-compsoc)
% papers do!
\IEEEdisplaynontitleabstractindextext
% \IEEEdisplaynontitleabstractindextext has no effect when using
% compsoc under a non-conference mode.

% For peer review papers, you can put extra information on the cover
% page as needed:
% \ifCLASSOPTIONpeerreview
% \begin{center} \bfseries EDICS Category: 3-BBND \end{center}
% \fi
%
% For peerreview papers, this IEEEtran command inserts a page break and
% creates the second title. It will be ignored for other modes.
\IEEEpeerreviewmaketitle

\ifCLASSOPTIONcompsoc
\IEEEraisesectionheading{\section{Introduction} \label{intro}}
\else
\section{Introduction}
 \label{intro}
\fi
\IEEEPARstart{E}{rasure} coding has become a viable alternative to replication as it provides the same level of reliability with significantly less storage overhead \cite{Weatherspoon:2002:ECV:646334.687814}. 
When replication is used, the data is available as long as at least one copy still exists. The storage overhead of storing one extra replica is $100\%$, while it is $200\%$ for 2 replicas, and so forth. Therefore, replication is not suitable for large-scale storage systems. Its high storage overhead implies a high hardware cost (disk drives and associated equipment) as well as a high operational cost that includes building space, power, cooling, maintenance, etc.

Compared to replication, traditional erasure coding reduces the storage overhead but at a higher repair cost that is expressed through a high repair bandwidth (the amount of data transferred during the repair process), excessive input and output operations (I/Os), and expensive computations.
Practical implementations of erasure coding in distributed storage systems such as Google File System (GFS)\cite{Ghemawat:2003:GFS:945445.945450} and Hadoop Distributed File System (HDFS)\cite{Shvachko:2010:HDF:1913798.1914427} require Maximum Distance Separable (MDS), high-rate, and repair-efficient codes. 
Two primary metrics that determine the repair efficiency of a code are the amount of accessed data (\emph{data-read}) from the non-failed nodes and the amount of transferred data (\emph{repair bandwidth}).
We are interested in codes where these two metrics are \emph{minimized} and \emph{equal} at the same time because they are directly linked with the number of I/Os. The number of I/Os is an important parameter in storage systems especially for applications that serve a large number of user requests or perform data intensive computations where I/Os are becoming the primary bottleneck. There are two types of I/Os: sequential and random operations. \emph{Sequential operations} access locations on the storage device in a contiguous manner, while \emph{random operations} access locations in a non-contiguous manner.

Reed-Solomon (RS) codes \cite{10.2307/2098968} are a well-known representative of traditional \emph{MDS} codes.
Under a $(n, k)$ RS code, a file of $M$ symbols is stored across $n$ nodes with equal capacity of $\frac{M}{k}$ symbols. The missing/unavailable data from one node can be recovered from any $k$ out of $n$ nodes. Thus, a transfer of $k \times \frac{M}{k} = M$ symbols (the whole file) is needed in order to repair $\frac{1}{k}$-th of the file. A $(n, k)$ RS code tolerates up to $r=n-k$ failures without any permanent data-loss.
In general, the repair bandwidth and the number of I/Os are $k$ times higher with a $(n, k)$ RS code than with replication. 
The entire data from $k$ nodes has to be read during the recovery process with a RS code, hence the reads are sequential.

A powerful class of erasure codes that optimizes for repair bandwidth and storage costs has been proposed in \cite{5550492}. Minimum Bandwidth Regenerating (MBR) codes are optimal in terms of the repair bandwidth, while \emph{Minimum Storage Regenerating (MSR)} codes are optimal in terms of the storage.  
MSR codes possess all properties of MDS codes in addition to providing minimum repair bandwidth.
The repair bandwidth for a single failure for a $(n, k)$ MSR code is lower bounded by \cite{5550492}:
\begin{equation}
\gamma_{MSR}^{min} \leq \frac{M}{k} \frac{n-1}{n-k}.
\label{optimal}
\end{equation}
The length of the vector of symbols that a node stores in a single instance of the code determines the \emph{sub-packetization level} $\alpha$.
The data from a failed node is recovered by transferring $\beta$ symbols, where $\beta < \alpha$, from each of $d$ non-failed nodes called \emph{helpers}. Thus, the repair bandwidth $\gamma$ is equal to $d\beta$, where $\alpha \leq d\beta \ll M$.
The repair bandwidth for a single failure is minimized when a fraction of $\sfrac{1}{r}$-th of the stored data in all $d=n-1$ helpers is accessed and transferred, as it is shown in Eq. (\ref{optimal}). MDS codes that achieve the minimum repair bandwidth are called \emph{access-optimal codes}. However, the access operations are in a non-contiguous manner, meaning that the number of I/Os for MSR codes can be several orders of magnitude greater compared to the number of I/Os for RS codes. %Working with a large sub-packetization level increases enormously the I/Os, and it has a negative impact on the speed of repair, the encoding throughput, the CPU utilization, the data availability, etc. Consequently, implementing MSR codes in distributed storage systems can be quite complex, and it may result in poor system performance.
%This has a negative impact on the repair time, the encoding throughput, the CPU utilization, the data availability, etc. 
%Consequently, implementing MSR codes in distributed storage systems can be quite complex, and it may result in poor system performance.

MSR codes optimize the repair bandwidth only for a single failure. 
Although single failures are dominant \cite{Rashmi:2014:HGF:2619239.2626325}, multiple failures are often correlated and co-occurring in practice \cite{179135,conf/fast/HuCLT12}. We believe that it is crucial to have MDS erasure codes that provide low repair bandwidth and low number of I/Os for any combination of $n, k,$ and $\alpha$ when recovering from a single and multiple failures in order to have a generally accepted practical implementation of erasure codes for distributed storage systems. %This is detrimental to the system performance.

%In the section \ref{related} we present code constructions with different properties and sub-packetization levels.

%In general, MSR codes require an exponentially large sub-packetization level or an exponentially growing finite field.

\subsection{Our Contribution}

In this paper, we study both theoretical and practical aspects of the explicit construction of general sub-packetized erasure codes that was recently introduced in \cite{7463553}. Since the codes upon their definition were presented without any specific name, we call them here \emph{HashTag Erasure Codes (HTECs)}. %The name comes from the resemblance between the hashtag sign \# and the procedure of their construction. HTECs are simultaneously optimal in terms of storage, reliability, and repair bandwidth. In this paper we also show how HTECs can be further optimized to reduce the number of I/O operations. We have implemented HTECs in C++ and applied them in the latest release of Hadoop. In this paper, we give measurements for the repair time of one failed node for various codes. We also study the recovery properties of HTECs not just for single failures of the systematic nodes but also for multiple failures of the systematic nodes. 

The first contribution of this paper is that it provides many concrete instances of HTECs. HTEC construction is an explicit construction of MDS codes for an arbitrary sub-packetization level $\alpha$. We show that the bandwidth savings for a single failure with HTECs can be up to 60\% and 30\% compared to RS \cite{10.2307/2098968} and Piggyback codes \cite{6620242}, respectively. We also show that even for double failures, the codes still achieve bandwidth savings of 20\% compared to RS codes.
The code construction is general and works for any combination of $n, k,$ and $\alpha$ even when $r$ is not a divisor of $k$. 
HTECs are access-optimal codes for $\alpha=r^{\lceil\sfrac{k}{r}\rceil}$, i.e. they achieve the MSR point on the optimal trade-off curve between the storage and the repair bandwidth shown in Fig. \ref{curve}. % our codes achieve the theoretical lower bound of the repair bandwidth of MSR codes. %A help-by-transfer code possessing optimal repair bandwidth is access optimal.
For all other values of $\alpha$ that are less than $r^{\lceil\sfrac{k}{r}\rceil}$, HTECs achieve all points that lie from the MSR point to the conventional erasure code (EC) point on the curve in Fig. \ref{curve}.

\begin{figure}
	\centering
	\includegraphics[width=2.4in]{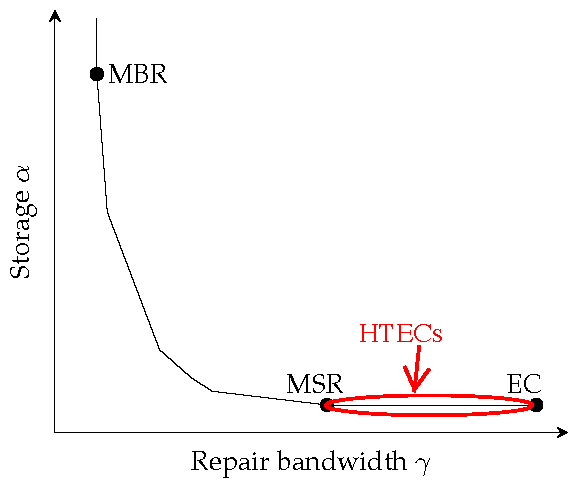}
	\vspace{-0.3cm}
	\caption{Regenerating codes (MSR and MBR) offer performance improvement compared to conventional erasure coding (EC). We propose explicit constructions of MDS codes that lie on the curve from the MSR point, including it, to the EC point.}\label{curve}
	\vspace{-0.4cm}
\end{figure}

The second contribution is that we elaborate the correlation between the repair bandwidth, the I/Os, and the repair time in terms of $\alpha$. 
While large values of $\alpha$ guarantee low average repair bandwidth, they inevitably increase the number of I/Os (in particular the number of random access operations) that has an impact on the repair time, the throughput, the CPU utilization, and the data availability. %That means that large values of $\alpha$ create I/O bottlenecks. %\textbf{Moreover, the presented measurements of the repair time from Hadoop should that} 
Hadoop measurements show that HTECs can be of a great practical importance in distributed storage systems as they provide the system designer with greater flexibility in terms of selecting various code parameters such as the rate of the code, the size of the blocks and splits of the files, and various values of $\alpha$ in order to fine tune and minimize the overall repair time. %\textbf{From this perspective HTECs can be constructed for various values of $\alpha$, hence they address one crucial practical engineering problem: how to identify the values of $\alpha$ that give optimal overall system performance.} 
We show that all $k$ systematic nodes are clustered in subsets of $r$ nodes (with the exception of the last subset that can have less than $r$ nodes). Then, we prove that there is one subset that can be repaired with $n-1$ sequential reads. Further on, for all other subsets of $r$ nodes the discontiguity increases sequentially. 

Our third contribution is a deeper scrutiny of the repair process with HTECs. In general, the repair process for a single failure is linear and highly parallel. This means a set of $\lceil \sfrac{\alpha}{r}\rceil$ symbols is independently repaired first and used along with the accessed data from other nodes to repair the remaining symbols of the failed node. 
We show that HTECs have one extra beneficial feature compared to RS codes: the amount of accessed and transferred data when multiple failures occur is less than RS codes. 
%The main idea when recovering from multiple failures is to access and transfer the same data from the helpers as when recovering from a single failure of the systematic nodes. 
To the best of the authors' knowledge, HTECs are the first codes in the literature that offer bandwidth savings when recovering from multiple failures for any code parameters including the high-rate regime. 

\begin{table*}[t]
	\caption{Comparison of Explicit MDS Code Constructions}\label{compare}
	\vspace{-0.2cm}
	\centering
	\begin{tabular}{|c|c|c|c|c|c|}
		\hline	\rule{0pt}{3ex} 		
		Code construction & MDS & $k, r$ parameters & Sub-packetization level & Optimized for $t$ failures \\
		\hline \rule{0pt}{3ex} 
		High-rate MSR \cite{7084873} & Y & $r=\sfrac{k}{m}, m\geq 1$ & $r^\frac{k}{r}$ & $t=1$\\
		\hline \rule{0pt}{3ex} 
		MSR over small fields \cite{DBLP:journals/corr/RavivSE15} & Y & $r=2,3$ & $r^\frac{k}{r}$ & $t=1$\\
		\hline \rule{0pt}{3ex}
		Product-Matrix MSR \cite{5961826} & Y & $r\geq k-1$ & $r$ & $t=1$\\
		\hline  \rule{0pt}{3ex}
		Piggyback 1 \cite{6620242} & Y & All & $2 m$, $m\geq 1$ & $t=1$\\
		\hline \rule{0pt}{3ex}
		Piggyback 2 \cite{6620242} & Y & $r\geq 3$ & $(2r-3)m$, $m\geq 1$ & $t=1$\\
		\hline \rule{0pt}{3ex}
		Rotated RS \cite{conf/fast/KhanBPPH12} & Y & $r\in\{2,3\},k\leq36$ & 2 & $t=1$\\
		\hline \rule{0pt}{3ex}
		EVENODD, RDP \cite{364531,corbett:rdp} & Y & $r=2$ & $k$ & $t=1$\\
		\hline \rule{0pt}{3ex}
		MSCR \cite{conf/isit/ChenS13} & Y & $r=k$ & $r$ & $2\leq t \leq r$\\
		\hline \rule{0pt}{3ex}
		CORE \cite{6880379} & Y & $r=k$ & $r$ & $1\leq t \leq r$\\
		\hline \rule{0pt}{3ex}
		New Piggyback \cite{DBLP:journals/corr/ShangguanG16a} & Y & $r\ll k$ & $r$ & $t=1$\\
		\hline \rule{0pt}{3ex}
		HashTag Erasure Codes (HTEC) & Y & All & $2\leq \alpha\leq r^{\lceil\sfrac{k}{r}\rceil}$ & $1\leq t \leq r$\\
		\hline 
	\end{tabular}
	\vspace{-0.2cm}
\end{table*}

\subsection{Paper Outline}
The rest of the paper is organized as follows. Section \ref{RW} reviews the state-of-the-art for MDS erasure codes for distributed storage.
Section \ref{mathPrel} provides the mathematical preliminaries and properties of HTECs. HTECs examples are presented in Section \ref{examples}.
An algorithm for I/O optimization is presented in Section \ref{ioOpt}.
Hadoop measurements and performance comparisons between HTECs and representative codes from the literature are given in Section \ref{anal}. 
Section \ref{dis} discusses open issues, and Section \ref{conc} concludes the paper.

\section{Related Works}\label{RW}

There has been a considerable amount of work in the area of erasure codes for distributed storage. 
%Due to space limitation, we only review the most relevant literature on \emph{exact repair} where the reconstructed data is exactly the same as the lost data.
We only review the most relevant literature about \emph{exact repair codes} where the reconstructed data is exactly the same as the lost data because HTECs belong to the class of exact repair codes. 
Table \ref{compare} compares a selection of codes with respect to the MDS property, the supported parameters, the sub-packetization level, and the number of failures they are optimized for.

%Rashmi et al. presented a product-matrix framework to construct $(n, k, d \leq2k-2)$ MSR codes \cite{5961826}.
In \cite{6416066}, high-rate $(n, k=n-2, d=n-1)$ MSR codes using Hadamard designs for optimal repair of both systematic and parity nodes were constructed. 
The work also presented a general construction for optimal repair only of the systematic nodes for $\alpha$ of $r^k$.
Codes for optimal systematic-repair for the same $\alpha$ appeared in \cite{6033730} and \cite{6352912}.
The work was subsequently extended in \cite{6120327} to include repair of the parity nodes.

Furthermore, the work in \cite{6283042} showed that the required $\alpha$ for construction of access-optimal MSR codes is $r^\frac{k}{r}$. 
Few code constructions for optimal systematic-repair for $\alpha=r^\frac{k}{r}$ followed in the literature.
In \cite{6190343}, Cadambe et al. proposed a high-rate MSR construction that is not explicit and requires a large field size. 
Later, an alternate construction of access-optimal MSR codes for $\alpha=r^m$, where $m=\frac{k}{r}$, was presented in \cite{7084873}.
An essential condition for the code construction in \cite{7084873} is that $m$ has to be an integer $m \geq 1$, where $k$ is set to $rm$ and $\alpha$ to $r^m$. 
Explicit access-optimal systematic-repair MSR codes over small finite fields for $\alpha \geq r^{\frac{k}{r}}$ were presented in \cite{DBLP:journals/corr/RavivSE15}. However, these codes exist only for $r=2,3$.
%An explicit construction for high-rate MSR codes with sub-packetization of $\alpha=r^{n-1}$ was presented in \cite{DBLP:journals/corr/YeB16}.
%The resuts were further extended in \cite{DBLP:journals/corr/YeB16a} to meet the lower bound of the sub-packetization level, $\alpha=r^{\Bigl\lceil \frac{k}{r} \Bigr\rceil}$, over any finite field $\mathbf{F}$ as long as $|\mathbf{F}| \geq r \Bigl\lceil \frac{k}{r} \Bigr\rceil$.

Although the aforementioned constructions achieve the lower bound of the repair bandwidth for a single failure, they have not been practically implemented in real-world distributed storage systems. There are at least two reasons for that: either MSR codes require encoding/decoding operations over an exponentially growing finite field or $\alpha$ increases exponentially.  %Consequently, implementation of RGCs in production systems requires dealing with complex and bug-prone algorithms.
Practical implementations of erasure coding \cite{Rashmi:2014:HGF:2619239.2626325,194416} showed that a good erasure code has to provide a satisfactory tradeoff between the system-level metrics such as storage overhead, reliability, repair bandwidth, and I/Os. One way to achieve a satisfactory system tradeoff is to work with small sub-packetization levels.

Piggyback codes \cite{6620242} are a good example of practical MDS codes with small $\alpha$. The basic idea of the piggyback framework is to take multiple instances of an existing code and add carefully designed functions of the data from one instance to another.
Piggyback codes have better repair bandwidth performance than Rotated-RS \cite{conf/fast/KhanBPPH12}, EVENODD \cite{364531}, and RDP codes \cite{corbett:rdp}. Rotated-RS codes exist for $r \in \{2,3\}$ and $k\leq 36$, while EVENODD and RDP exist only for $r=2$.
The idea of the piggyback framework has been adopted in several works \cite{Rashmi:2014:HGF:2619239.2626325,7463553,DBLP:journals/corr/ShangguanG16a,7222447}.
In \cite{Rashmi:2014:HGF:2619239.2626325}, Rashmi et al. reported bandwidth savings of $35\%$ for a $(14, 10)$ code with $\alpha=2$ when repairing a systematic node compared to a $(14, 10)$ RS code.
A $(14, 10)$ HTEC \cite{7463553}, studied also in this paper, offers bandwidth savings of 41\% for $\alpha=2$, and the bandwidth savings can go up to 67.5\% for $\alpha$ equal to $r^{\lceil \sfrac{k}{r} \rceil}=64$. 
A flexible code construction such as the one in \cite{7463553} has not been presented in \cite{7084873,6190343,6283042,DBLP:journals/corr/ShangguanG16a,7222447}.
A new piggyback design that achieves a repair bandwidth of $\frac{\sqrt{2r-1}}{r}$ for a systematic node repair was recently presented in \cite{DBLP:journals/corr/ShangguanG16a}. The limitation of the new piggyback design is that $\alpha$ is equal to $r$ and $r\ll k$.
Yang et al. \cite{7222447} applied piggybacking to optimize the repair of parity nodes while retaining the optimal repair bandwidth for systematic nodes. 

Another way to improve the system performance is I/O optimization while still keeping the storage and bandwidth optimality.
An algorithm that transforms Product-Matrix-MSR codes \cite{5961826} into I/O optimal codes (termed PM-RBT codes) was presented in \cite{188424}. However, PM-RBT codes exist only for the low-rate regime, i.e. $r\geq k-1$. 

All MDS erasure codes discussed previously optimize the repair of a single node failure. %Next we review code constructions that outperform MSR codes when repairing from multiple failures.
A cooperative recovery mechanism in the minimum-storage regime for repairing from multiple failures was proposed in \cite{5402494,5502493}. 
%The repair procedure is divided into two phases. In the first phase, the new nodes download some repair data from some surviving nodes, and in the second phase, the new nodes exchange data among themselves.
Minimum Storage Collaborative Regenerating (MSCR) codes minimize the repair bandwidth while still keeping the MDS property by allowing new nodes to download data from the non-failed nodes and the new nodes to exchange data among themselves.
The repair bandwidth for MSCR codes under functional repair was independently derived in \cite{5402494} and \cite{5978920}.
%The existence of a random linear strong-MDS code under the assumption that the operations are in a sufficiently large finite field was showed in \cite{5402494}. The codes attain the MSR point but the decoding complexity is quite expensive. Adaptive regenerating codes where the numbers of failed and surviving nodes change over time were proposed in \cite{5978920}.
The authors in \cite{6847953} showed that it is possible to construct exact MSCR codes for optimal repair of two failures directly from existing exact MSR codes.
MSCR codes that cooperatively repair any number of systematic nodes, parity nodes, or a combination of one systematic and one parity node were presented in \cite{conf/isit/ChenS13}. However, their code rate is low ($n=2k$). %and they work over large finite field and the super-regular matrix $P$ should satisfy some extra conditions.
A study about the practical aspects of codes for the same rate ($n=2k$) in a system called CORE that supports multiple failures can be found in \cite{6880379}.
There is no explicit construction of high-rate MDS codes for exact repair from multiple failures at the time of writing of this paper.

\section{Mathematical preliminaries}\label{mathPrel}
We consider systematic coding where $k$ nodes store the original data without encoding it. We refer to these $k$ nodes as systematic nodes and the remaining $r=n-k$ nodes are called parity nodes. 
Additionally, the codes are MDS. A $(n, k)$ MDS code is optimal in terms of the \emph{storage-reliability tradeoff} because it offers maximum fault tolerance of up to $r$ arbitrary failures for the added storage overhead of $r$ nodes.

Dimakis et al. introduced the repair bandwidth as a new metric for repair efficiency of erasure codes \cite{5550492}. MDS codes that achieve the lower bound of the repair bandwidth given in (1) are optimal with respect to the \emph{storage-bandwidth tradeoff}.
MSR codes are optimal with respect to both storage-reliability and storage-bandwidth tradeoffs. The following table summarizes the notation used in this paper.
\begin{table}[htb]
	\vspace{-0.2cm}
	\begin{center}
		\begin{tabular}{|l|p{70mm}|}%{ |c|l| }
			\hline
			$n$ & Total number of nodes \\
			$k$ & Number of systematic nodes \\
			$r$ & Number of parity nodes \\
			$d$ & Number of non-failed nodes (helpers) \\ 
			$t$ & Number of failed nodes (failures), $1\leq t\leq r$ \\ 
			$d_j$ & The $j$-th systematic node, $1\leq j\leq k$ \\
			$p_l$ & The $l$-th parity node, $1\leq l\leq r$ \\
			$a_{i,j}$ & The $i$-th element of the $j$-th systematic node, $1\leq i\leq \alpha$ and $1\leq j\leq k$ \\
			$p_{i,l}$ & The $i$-th element of the $l$-th parity node, $1\leq i\leq \alpha$ and $1\leq l\leq r$ \\
			$\mathbf{F}_q$ & Finite field of size $q$ \\
			$q$ & Size $q$ of a finite field \\ 
			$c_{l,i,j}$ & Non-zero coefficient from a finite field, $1\leq l\leq r$, $1\leq i\leq \alpha$ and $1\leq j\leq k$ \\
			$M$ & Size of the original data \\
			$\alpha$ & Sub-packetization level \\
			$\beta$ & Amount of transferred data from a node \\
			$\gamma$ & Total amount of accessed and transferred data per node repair \\
			\hline
		\end{tabular}
		\vspace{-0.4cm}
	\end{center}
\end{table}

%In addition to the parameters $n$ and $k$, MSR codes are associated with the parameters $d$ (number of helpers) and $\alpha$ (sub-packetization level). In general, attaining the lower bound of the repair bandwidth requires an exponential increase of the finite field size that increases significantly the computational cost and an exponential increase of the number of blocks $\alpha$ that the data is split.  Access and transfer of many small blocks of data increase enormously the degree of discontiguity and forfeits the potential disk I/O savings during data reconstruction.

We have already presented some general properties of RS and MSR codes in Section \ref{intro}.
We illustrate them via examples in the next two Subsections where we also motivate for the need of HTECs.

\subsection{An Example with Reed-Solomon Codes}
Let us consider the following example with a RS code for $k=6$ and $r=3$.
The storage overhead is $50\%$ and the code can recover from up to 3 failures.
Fig. \ref{stor}a depicts the storage of a file of 54MB across 9 nodes where each node stores 9MB. It also illustrates the reconstruction of the node $d_1$ from the nodes $d_2, \ldots, d_6$ and $p_1$. %The available nodes that participate in the recovery process are called helpers. 
In order to reconstruct 9MB of unavailable data, $6\times9$MB$=$54MB are read from 6 nodes and transferred across the network to the node performing the decoding computations.
The same amount of data (54MB) is needed to repair from 1, 2, and 3 node failures.
The number of random reads for this example is 6 since data is read in a contiguous manner from 6 different locations.

\subsection{Two Examples with HashTag Erasure Codes}
%To improve the recovery performance of a distributed storage, it is important to minimize the repair bandwidth.
%MSR codes achieve the lowest repair bandwidth in the literature.
First we illustrate the performance improvement with an $(9, 6)$ access-optimal HTEC for $\alpha=9$ and $d=8$ compared to a $(9, 6)$ RS code.
The bandwidth to repair any systematic node is reduced by 55.6\% compared to a $(9, 6)$ RS code. 
The reconstruction of node $d_1$ is illustrated in Fig. \ref{stor}b. 
In order to reconstruct the data from $d_1$, $1/3$-rd of the stored data from all 8 helpers is accessed and transferred, hence the repair bandwidth is only 24MB compared to 54MB with RS. 
HTECs achieve the minimum repair bandwidth given in Eq. (\ref{optimal}) when $\alpha=9$.
However, contacting 8 nodes when recovering from a single failure increases the number of seek operations and random I/Os.

A smaller value of $\alpha$ reduces the number of random I/Os. The repair of $d_1$ presented in Fig. \ref{stor}c is for $\alpha=6$. Namely, 3MB are accessed and transferred from $d_2, d_3, d_6, p_1, p_2,$ and $p_3$, while 4.5MB from $d_4$ and $d_5$. Thus, the repair bandwidth for $d_1$ is 27MB. The same amount of data is needed to repair $d_3, d_4,$ and $d_6$, while 30MB of data is needed to repair $d_2$ and $d_5$. The average repair bandwidth, defined as the ratio of the total repair bandwidth to repair all systematic nodes to the file size, for the systematic nodes is 28MB. Implementing an erasure code for $\alpha=6$ is simpler than an erasure code for $\alpha=9$, and it still provides bandwidth savings compared to 54MB with RS while it is slightly more than 24MB with the MSR code for $\alpha=9$. 
The big savings that come from the bandwidth reduction are evident when storing petabytes of data.

\begin{figure}
	\centering
	\includegraphics[width=3.45in]{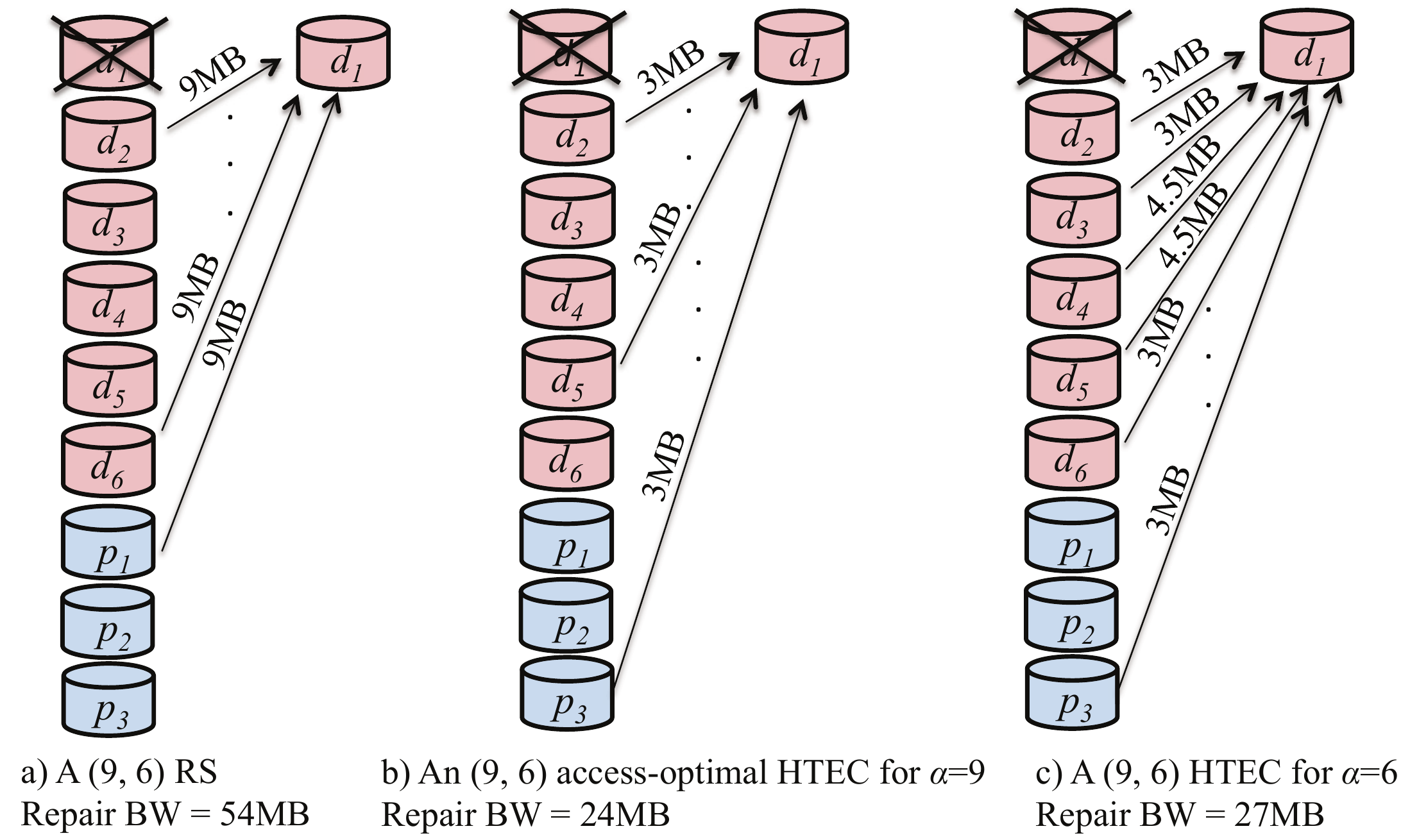}
	\vspace{-0.3cm}
	\caption{Amount of accessed and transferred data for repair of the systematic node $d_1$ for a (9,6) RS code, an (9, 6) access-optimal HTEC for $\alpha=9$, and a (9, 6) HTEC for $\alpha=6$. The systematic nodes are represented in red and the parity nodes in blue.}
	\label{stor}
	\vspace{-0.2cm}
\end{figure}
%3., 3.33333, 3., 3., 3.33333, 3.

%{\bf Mislam deka ovoj paragraf kompletno da go otstranime zatoa shto e povtoruvanje}
%If more than one node fails, then access-optimal MSR codes cannot connect to $n-1$ helpers and cannot achieve the bandwidth in Eq.\ref{optimal}. To recover from multiple failures, a straightforward apporach is to transfer the whole amount stored on $k$ nodes (like RS codes). In Section ?? we will analyze if HTECs achieve bandwidth savings when recovering from multiple failures as well. Then in Section ??, in order to ensure disk savings, we will optimize HTECs in terms of I/Os while still preserving the storage, reliability and bandwidth optimality.

% we present an algorithm for constructing MDS erasure codes for any combination of $n$, $k$ and $\alpha$ that are simultaneously optimal in terms of storage, reliability, repair bandwidth and I/O. 

%Our goal is to optimize the MDS erasure codes presented in \cite{7463553} in terms of I/Os while still preserving the storage, reliability and bandwidth optimality.

%\vspace{-0.15cm}
%but we also present a general construction of MDS erasure codes for an arbitrary sub-packetization level. We optimize the code construction for optimal repair from a single systematic failure, introduced in our earlier work \cite{7463553}, to provide good I/O performance and repair bandwidth for multiple failures as well.

\subsection{Definition of HashTag Erasure Codes} \label{algorithm}
Consider a file of size $M = k \alpha$ symbols from a finite field $\mathbf{F}_q$ stored in $k$ systematic nodes $d_j$ of capacity $\alpha$ symbols.
The general algorithm introduced in \cite{7463553} offers a rich design space for constructing HTECs for various combinations of $k$ systematic nodes, $r$ parity nodes (the total number of nodes is $n=k+r$), and sub-packetization levels $\alpha$. %In this subsection, we give a high level algorithm (Alg. \ref{HighLevel}) and a detailed algorithm (Alg. \ref{Encode}) for constructing HTECs. 

As a general notation we say that a systematic node $d_j$, where $1\le j \le k$, consists of an indexed set of $\alpha$ symbols  $\{a_{1,j},a_{2,j},\ldots,a_{\alpha,j}\}$.
The set $N=\{d_1, \ldots, d_k\}$ of $k$ systematic nodes is partitioned in $\lceil \sfrac{k}{r}\rceil$ disjunctive subsets $J_1, J_2, \ldots, J_{\lceil \sfrac{k}{r}\rceil}$ where $|J_\nu|=r$ (if $r$ does not divide $k$ then $J_{\lceil \sfrac{k}{r}\rceil}$ has $k \mod{r}$ elements) and $N=\cup_{\nu=1}^{\lceil \sfrac{k}{r}\rceil}J_\nu$. In general, the partitioning can be any random permutation of $k$ nodes. Without loss of generality we use the natural ordering as follows:
$J_1=\{d_1,\ldots,d_r \} $, $J_2=\{d_{r+1},\ldots,d_{2r} \} $, $\ldots$ , $J_{\lceil \sfrac{k}{r}\rceil} = \{d_{ \lfloor \sfrac{k}{r} \rfloor \times r + 1},\ldots,d_{k} \} $. 

%\begin{figure}
	\begin{center}
	\includegraphics[width=3.3in]{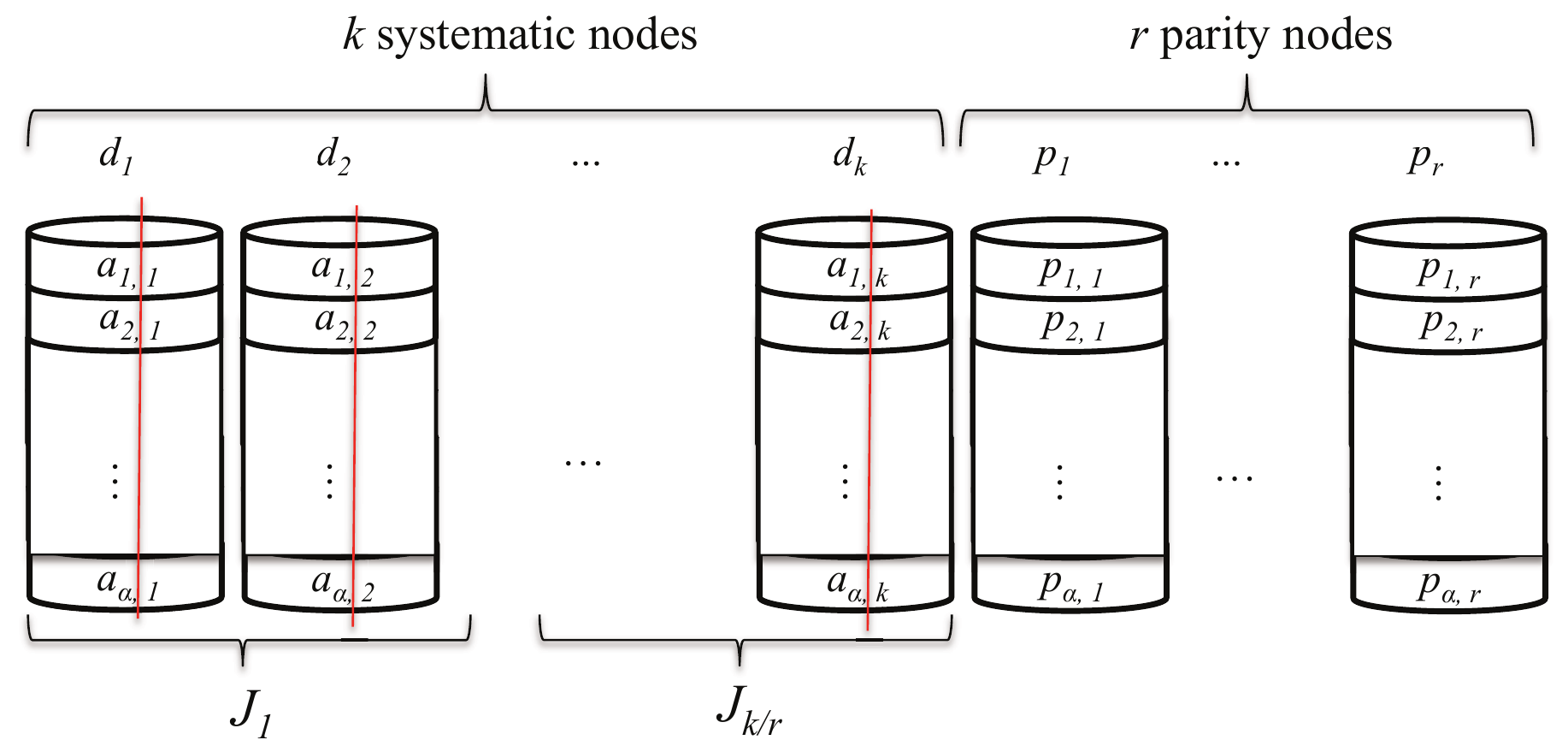}
	\vspace{-0.4cm}
	\label{general}
\end{center}
%\end{figure}
The basic idea for generating the linear dependencies for the parity nodes $p_l, l=1,\ldots,r,$ can be described as setting up $r$ grids where the pair of indexes $(i, j)$ of the symbols $a_{i, j}$ of each systematic node (struck through with red lines) are represented as columns in a grid (resemblance to the vertical lines in the hashtag sign \# in the expression for $P_i$), and where the linear dependencies for the symbols $p_{i,l}, i=1,\ldots,\alpha$ and $l=1,\ldots,r$, of the parity nodes are obtained as linear combinations from the elements which indexes are represented in the rows of the grid (resemblance to the horizontal lines in \#). Consequently, the name \emph{HashTag Erasure Codes (HTECs)} comes from the resemblance between the code construction and the hashtag sign \#. 

In other words, the basic data structure component for construction of HTECs is an index array $P = ((i,j))_{\alpha \times k}$ of size $\alpha \times k$, where $\alpha \leq r^{\lceil \sfrac{k}{r}\rceil}$. The index arrays are generated by Alg. \ref{HighLevel}, Alg. \ref{Encode}, and Alg. \ref{Valid}. Alg. \ref{HighLevel} is a high level algorithm, while Alg. \ref{Encode} is a detailed algorithm that calls Alg. \ref{Valid} for splitting the symbols following defined conditions. 
%The elements $p_{i,1}$, $i=1,\ldots,\alpha$, in $p_1$ are a linear combination of the symbols with indexes present in the rows of $P_1$. 
In the initialization phase of Step 1 in Alg. \ref{Encode}, $r$ index arrays are constructed as follows:

\begin{tikzpicture}
\hspace{1.0cm}\matrix (mat) [%
matrix of nodes,
left delimiter={[},right delimiter={]}
]
{%
	$(1, 1)$ & $(1, 2)$ & \ldots & $(1, k)$\\
	$(2, 1)$ & $(2, 2)$ & \ldots & $(2, k)$\\
	$\vdots$ & $\vdots$ & $\ddots$ & $\vdots$\\
	$(\alpha, 1)$ & $(\alpha, 2)$ & \ldots & $(\alpha, k)$\\
};
% do the strike out thing
\draw[black] (mat-1-1.west)  -- (mat-1-4.east);
\draw[black] (mat-2-1.west)  -- (mat-2-4.east);
\draw[red] (mat-1-1.north) -- (mat-4-1.south);
\draw[red] (mat-1-2.north) -- (mat-4-2.south);
\draw[red] (mat-1-4.north) -- (mat-4-4.south);
\node [left] at (-2.5,0) {$P_i = $};
\end{tikzpicture}
	
In Step 2, additional $\lceil \sfrac{k}{r}\rceil$ columns with pairs $(0, 0)$ are added to $P_2, \ldots, P_r$ as:
$$\ \ \ \ \ \ \ \ \ \ \ \ \ \ \ \ \ \ \ \ \ \ \ \ \ \ \ \ \ \ \ \ \ \ \ \ \ \ \ \ \ \ \ \ \ \ \ \ \ \ \ \ \ \overbrace{ \ \ \ \ \ \ \ \ \ \ \ \ \ \ \ \ \ \ \ \ \ \ \ \ \ \ \ \ \ }^{\lceil \frac{k}{r} \rceil}$$
\begin{equation*}
P_i=
\begin{bmatrix}
(1, 1) & (1, 2) & \ldots & (1, k) &  (0, 0) & \ldots & (0, 0) \\
(2, 1) & (2, 2) & \ldots & (2, k) & (0, 0) & \ldots & (0, 0) \\
\vdotswithin{1} & \vdotswithin{\alpha_n} & \ddots & \vdotswithin{{\alpha_n}^{k-1}}\\
(\alpha, 1) & (\alpha, 2) & \ldots & (\alpha, k) & (0, 0) & \ldots & (0, 0) \\
\end{bmatrix}.
\end{equation*}

%So, the main idea is to assign $\lceil \frac{\alpha}{r}\rceil$ rows to each systematic node $d_j$ and schedule the elements $a_{i,j}$ that do not belong to the assigned rows in the $\lceil \frac{\alpha}{r}\rceil$ pre-assigned rows.

In the next steps of Alg. \ref{Encode}, the zero pairs are replaced with concrete $(i, j)$ pairs so that the repair bandwidth is minimized for a given sub-packetization level $\alpha$. The set of all symbols in $d_j$ is partitioned in disjunctive subsets where at least one subset has $portion = \lceil \sfrac{\alpha}{r}\rceil$ number of elements. The values of $\alpha$ and $k$ determine two phases of the algorithm. 
The first phase starts with a granulation level parameter called $run$ that is initialized to $\lceil \sfrac{\alpha}{r}\rceil$ and a parameter called $step$ initialized to 0. 
These parameters affect how the $i$ indexes of the elements in the systematic nodes are scheduled. 
The set of indexes $D=\{1,\ldots,\alpha\}$, where the $i-$th index of $a_{i,j}$ from $d_j$ is represented by $i$ in $D$, is partitioned in $r$ disjunctive subsets $D = \cup_{\rho=1}^{r}D_{\rho,d_j}$ where each subset has $portion$ elements.
If the elements in the subsets of the partition for node $d_j$, $\mathcal{D}_{d_j} = \{D_{1,d_j},\ldots,D_{r,d_j}\}$, are taken in runs of $run$ consecutive elements with distance equal to $step$, then the partition is called a \emph{valid partition}. %of a set of indexes $D=\{1,\ldots,\alpha\}$, where the $i-$th index of $a_{i,j}$ from $d_j$ is represented by $i$ in $D$, is a partitioning in $r$ disjunctive subsets $D_{d_j} = \cup_{\rho=1}^{r}D_{\rho,d_j}$ where each subset has $portion$ . 
For the subsequent $(k+\nu)$-th column in $P_2, \ldots, P_r$, where $\nu \in \{1,\ldots,\lceil \sfrac{k}{r}\rceil\}$, the scheduling of the indexes corresponding to the elements from the nodes in $J_\nu$ is done in subsets of indexes from a \emph{valid partition} that is an output of Alg. \ref{Valid}. If $r$ divides $\alpha$, then the \emph{valid partition} for all nodes in $J_\nu$ is equal. If $r$ does not divide $\alpha$, then the \emph{valid partition} has to contain at least one subset $D_{\rho,d_j}$ with $portion$ elements that correspond to the row indexes in the $(k+\nu)$-th column in one of the arrays $P_2, \ldots, P_r$ that are all zero pairs. 

\emph{\textbf{Example 1.}}
	Let us take a $(9, 6)$ code for $\alpha = 9$. The systematic nodes are split into two subsets, $J_1=\{d_1, d_2, d_3\}$ and $J_2=\{d_4, d_5, d_6\}$ as presented in Fig. \ref{9_6_9}.
	The set of the elements corresponding to the $i$ indexes of $a_{i,1}$ from $d_1$ is represented as $D=\{1, 2, \ldots, 9\}$. Since $portion = \lceil \sfrac{9}{3}\rceil=3$, then $D$ is divided into subsets of 3 elements. Following Alg. \ref{Encode}, $run=3$ and $step=0$ for all nodes in $J_1$. Additionally, $r=3$ divides $\alpha=9$, thus, $\mathcal{D}_{d_1}=\mathcal{D}_{d_2}=\mathcal{D}_{d_3}$. The \emph{valid partition} $\mathcal{D}_{d_1}$ that is an output from Alg. \ref{Valid} is obtained as follows:
	 	\begin{center}
	 	\includegraphics[width=2.5in]{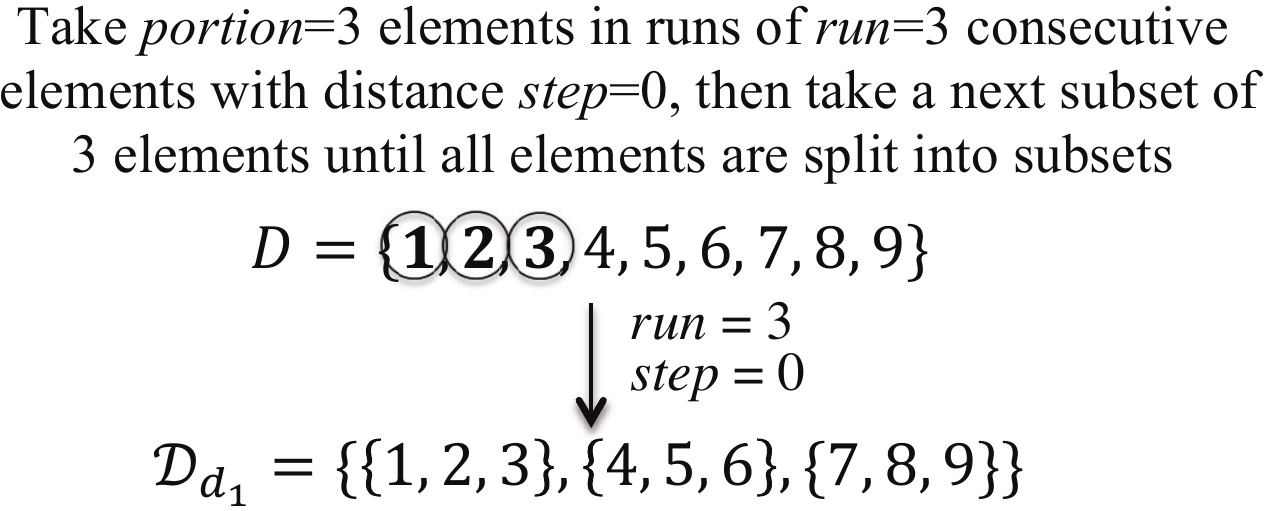}
	 	%\vspace{-0.2cm}
	 	\label{partitioning}
	 \end{center}
	When calculating the \emph{valid partition} for the nodes in $J_2$, the same steps are performed but for $run=1$ and $step=2$. Consequently, the \emph{valid partition} $\mathcal{D}_{d_4}$ that is equal to $\mathcal{D}_{d_5}$ and $\mathcal{D}_{d_6}$ is generated as follows:
		\begin{center}
		\includegraphics[width=2.4in]{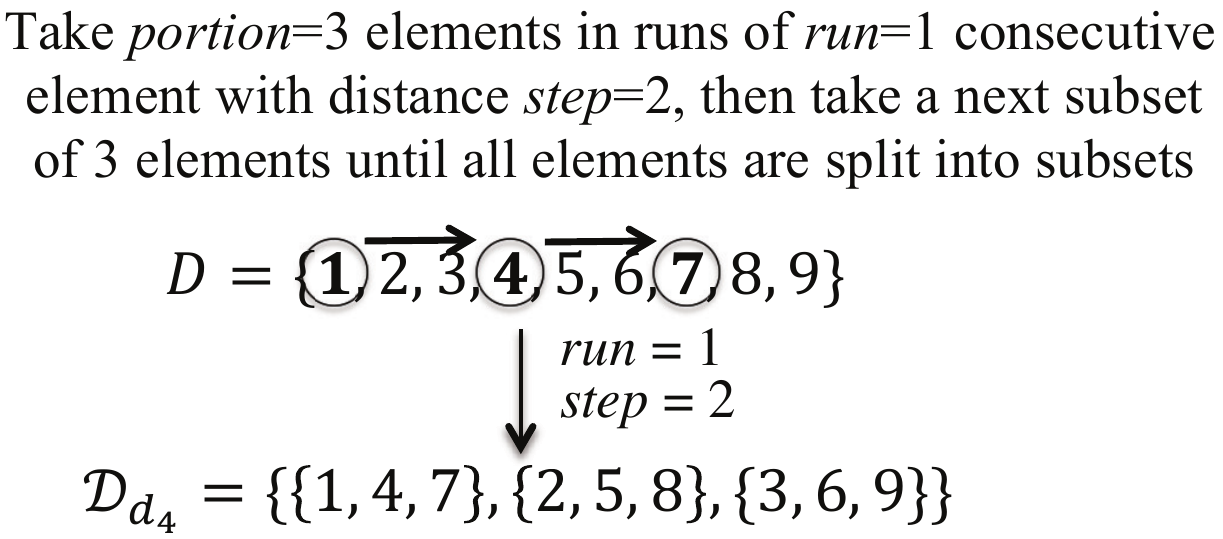}
			%\vspace{-0.2cm}
			\label{partitioning}
		\end{center}
	The presented explanation so far corresponds to Steps 1-13 from Alg. \ref{Encode}. In Steps 14 and 15, the zero pairs in the index arrays $P_i$ (i.e. $P_2$ and $P_3$) are filled using the so far produced information about \emph{ValidPartitions}.
	The first 3 zero pairs in the $7$-th column of $P_2$ with $0$ distance between them are at the positions $(1,7), (2, 7)$ and $(3,7)$, thus, we choose the subset $\{1,2,3\}$ out of $\mathcal{D}_{d_1}$ to be $D_{\rho,d_1}$.
 	The indexes of the elements of $d_1$ with $i$ indexes that are not elements of $D_{\rho,d_1}$ (represented in red color in Fig. \ref{9_6_9}) are scheduled in the 1-st, 2-nd, and 3-rd row and 7-th column of $P_2$ and $P_3$.
	Similarly, we perform the same steps for all systematic nodes, and the corresponding $D_{\rho,d_j}$, where $j=1, \ldots, 6,$ are given as subsets of $D_1$ in Table \ref{Comparison3Partitions}. The final scheduling of the elements is presented in Fig. \ref{9_6_9}.
	A more detailed explanation for a $(9, 6)$ code for $\alpha=6$ can be found in Section \ref{explanation}.\ \ $\blacksquare$
	
	In the first phase, the $(i,j)$ pairs that replace the $(0, 0)$ pairs are chosen such that both Condition 1 and Condition 2 are satisfied. The granulation level $run$ decreases by a factor $r$ with every round. 
Once $run$ becomes equal to 1 and there are still $(0, 0)$ pairs that have to get some $(i, j)$ values from the unscheduled elements in the systematic nodes, the second phase starts where the remaining indexes are chosen such that only Condition 2 is satisfied. 
\begin{itemize}
	\item \emph{Condition 1}: %A valid partitioning for given parameters $k$, $r$, $portion$, $run$ and $step$ is a partitioning where %
	At least one subset $D_{\rho,d_j}$ has $portion=\lceil \sfrac{\alpha}{r}\rceil$ elements with runs of $run$ consecutive elements separated with a distance between the indexes equal to $step$. The elements of that subset correspond to the row indexes in the $(k+\nu)$-th column, where $\nu=1,\ldots,\lceil \sfrac{k}{r}\rceil,$ in one of the arrays $P_2, \ldots, P_r$ that are all zero pairs. The distance between two elements in one node is computed in a cyclical manner such that the distance between the elements $a_{\alpha-1}$ and $a_2$ is 2.
	
	%\item \textbf{\emph{Condition 1}: %A valid partitioning for given parameters $k$, $r$, $portion$, $run$ and $step$ is a partitioning where %
	%At least one subset $D_{\rho,d_j}$ has $portion$ elements with runs of $run$ consecutive elements separated with a distance between the indexes equal to $step$. The distance between two elements in one node is computed in a cyclical manner, i.e., the distance between the elements $a_{\alpha-1}$ and $a_2$ is 2.}
	
	%	\item \emph{Condition 2}: A necessary condition for the valid partitioning of the elements in the systematic nodes to achieve the lowest possible repair bandwidth is $D_{d_{j_1}} = D_{d_{j_2}} \mbox{for all}\  d_{j_1} \mbox{and}\  d_{j_2} \mbox{in}\ J_i \mbox{and}\ D_{\rho,d_{j_1}} \neq D_{\rho,d_{j_2}} \mbox{for all}\  d_{j_1} \mbox{and}\  d_{j_2} \mbox{systematic nodes in the system}$. If $portion$ divides $\alpha$, then $D_{\rho,d_j}$ for all $d_j$ in the $J_i$-th subset are disjunctive, i.e., $D = \cup_{j=1}^{r}D_{\rho,d_j}=\{1, \ldots, \alpha\}$.
	
	\item \emph{Condition 2}: A necessary condition for the valid partition to achieve the lowest possible repair bandwidth is $\mathcal{D}_{d_{j_1}} = \mathcal{D}_{d_{j_2}}$ for all $d_{j_1}$ and $d_{j_2}$ in $J_\nu$ and $D_{\rho,d_{j_1}}\neq D_{\rho,d_{j_2}}$ for all $d_{j_1}$ and $d_{j_2}$ systematic nodes in the system.
	If $portion=\lceil \sfrac{\alpha}{r}\rceil$ divides $\alpha$, then $D_{\rho,d_j}$ for all $d_j$ in $J_\nu$ are disjunctive, i.e. $D = \cup_{j=1}^{r}D_{\rho,d_j}=\{1, \ldots, \alpha\}$.
\end{itemize}

\begin{algorithm}
	\small
	\caption{High level description of an algorithm for generating HTEC for an arbitrary sub-packetization level
		\newline
		\textbf{Input:} $n, k, \alpha$;
		\newline
		\textbf{Output:} Index arrays $P_1, \ldots, P_r$.}
	%\newline
	%\textbf{Input}:$ValidPartitions, k, r, portion, run, step, J_{\nu}$;
	%\newline
	%\textbf{Output}: $\mathcal{D}_{d_j} = \{D_{1,d_j},\ldots,D_{r,d_j}\}$.
	\label{HighLevel}
	\begin{algorithmic}[1]
		%\Procedure{%Access all elements from the $r^{j-1}$ rows of the non-failed systematic nodes where the elements from the $r^{j-1}$ are written}{}
		%\For{$j=1$ to $k \setminus l$}
		\State{\textbf{Initialization:} $P_1, \ldots, P_r$ are initialized as index arrays $P = ((i,j))_{\alpha \times k}$;}
		\State{Append $\lceil \sfrac{k}{r}\rceil$ columns to $P_2, \ldots, P_r$ all initialized to $(0, 0)$;}
		\State \# Phase 1
		\State Set the granulation level $run \leftarrow \lceil \sfrac{\alpha}{r}\rceil$;
		\Repeat
		\State \parbox[t]{8cm}{Replace $(0, 0)$ pairs with indexes $(i, j)$ such that both Condition 1 and Condition 2 are satisfied;  \vspace{0.1cm}}
		\State \parbox[t]{8cm}{Decrease the granulation level $run$ by a factor $r$;}
		\Until{The granulation level $run > 1$}
		\State \# Phase 2
		\State If there are still $(0, 0)$ and unscheduled elements from the systematic nodes, choose $(i, j)$ such that only Condition 2 is satisfied;
		%\Else
		%\State Find $\mathcal{D}_{d_j}$ that satisfies Condition 2;
		%\EndIf
		\State Return the index arrays $P_1, \ldots, P_r$.
	\end{algorithmic}
\end{algorithm}

\begin{algorithm}
	\small
	\caption{Algorithm to generate the index arrays
		\newline
		\textbf{Input:} $n, k, \alpha$;
		\newline
		\textbf{Output:} Index arrays $P_1, \ldots, P_r$.}
	\label{Encode}
	\begin{algorithmic}[1]
		\State{\textbf{Initialization:} $P_1, \ldots, P_r$ are initialized as index arrays $P = ((i,j))_{\alpha \times k}$;}
		\State{Append $\lceil \sfrac{k}{r}\rceil$ columns to $P_2, \ldots, P_r$ all initialized to $(0, 0)$;}
		\State Set $portion \leftarrow \lceil \sfrac{\alpha}{r}\rceil$;
		%\State Set $reset \leftarrow \max \{ \nu \ \ |  \ \  \frac{\alpha}{r^{\nu - 1}} \ge 1 \} $;
		\State Set $ValidPartitions \leftarrow \emptyset$;
		\State Set $j \leftarrow 0$;
		\State \# Phase 1
		%\For{$j=1$ to $(reset-1) \times r$}
		\Repeat
		\State Set $j \leftarrow j+1$;
		\State Set $\nu \leftarrow \lceil \sfrac{j}{r}\rceil$;
		%\If{$\mu \ne 0 \mod{reset}$}
		%	\State Set $\nu \leftarrow \mu \mod{reset}$;
		%\Else
		%	\State Set $\nu \leftarrow reset$;
		%\EndIf
		\State Set $run \leftarrow \lceil \sfrac{\alpha}{r^\nu}\rceil$;
		\State Set $step \leftarrow \lceil \sfrac{\alpha}{r}\rceil-run$;
		\State \parbox[t]{8cm}{$\mathcal{D}_{d_j} =$ $ValidPartitioning(ValidPartitions$, $k$, $r$, $portion$, $run$, $step$, $J_{\nu})$;  \vspace{0.1cm}}
		\State \parbox[t]{8cm}{Set $ValidPartitions=ValidPartitions \cup \mathcal{D}_{d_j}$; \vspace{0.1cm}}
		\State \parbox[t]{8cm}{Determine one $D_{\rho,d_j} \in \mathcal{D}_{d_j}$ such that its elements correspond to row indexes in the $(k+\nu)$-th column in one of the arrays $P_2, \ldots, P_r$, that are all zero pairs $(0, 0)$; \vspace{0.1cm}}
		\State \parbox[t]{8cm}{The indexes in $D_{\rho,d_j}$ are the row positions where the pairs $(i,j)$ with indexes $i \in \mathcal{D} \setminus D_{\rho,d_j}$ are assigned in the $(k+\nu)$-th column of $P_2, \ldots, P_r$; \vspace{0.1cm}}
		%\State For the indexed set $D = \{1,\ldots,\alpha\}$ find a valid partitioning $\mathcal{D}_{d_j} = \{D_{1,d_j},\ldots,D_{r,d_j}\}$ for the given parameters $k$, $r$, $portion$, $run$ and $step$ such that the indexes in at least one subset $D_{\rho,d_j}$ with $portion$ elements correspond to indexes in the $(k+\mu)$-th column in one of the arrays $P_2, \ldots, P_r$, that are all zero values;
		%\State The indexes in $D_{\rho,d_j}$ are the positions where the elements with indexes that belong to the other $r-1$ subsets of the valid partitioning $\mathcal{D} \setminus D_{\rho,d_j}$ will be assigned in the $(k+\mu)$-th column for the arrays $P_2, \ldots, P_r$;
		%\EndFor
		\Until{$(run>1)\ \ \mbox{AND} \ \ (j \ne 0 \mod{r}) $}
		\State \# Phase 2
		\While{$j < k$}
		\State Set $j \leftarrow j+1$;
		\State Set $\nu \leftarrow \lceil \sfrac{j}{r}\rceil$;
		\State Set $run \leftarrow 0$;
		%\State Set $step \leftarrow 0$;
		\State \parbox[t]{8cm}{$\mathcal{D}_{d_j} =$ $ValidPartitioning(ValidPartitions$, $k$, $r$, $portion$, $run$, $step$, $J_{\nu})$;  \vspace{0.1cm}}
		\State \parbox[t]{8cm}{Set $ValidPartitions=ValidPartitions \cup \mathcal{D}_{d_j}$; \vspace{0.1cm}}
		\State \parbox[t]{8cm}{Determine one $D_{\rho,d_j} \in \mathcal{D}_{d_j}$ such that its elements correspond to row indexes in the $(k+\nu)$-th column in one of the arrays $P_2, \ldots, P_r$, that are all zero pairs $(0, 0)$; \vspace{0.1cm}}
		\State \parbox[t]{8cm}{The indexes in $D_{\rho,d_j}$ are the row positions where the pairs $(i,j)$ with indexes $i \in \mathcal{D} \setminus D_{\rho,d_j}$ are assigned in the $(k+\nu)$-th column of $P_2, \ldots, P_r$; \vspace{0.1cm}}
		%\State For the indexed set $D = \{1,\ldots,\alpha\}$ find a valid partitioning $\mathcal{D}_{d_j} = \{D_{1,d_j},\ldots,D_{r,d_j}\}$ for the given parameters $k$, $r$, $portion$, $run$ and $step$ such that the indexes in at least one subset $D_{\rho,d_j}$ with $portion$ elements correspond to indexes in the $(k+\mu)$-th column in one of the arrays $P_2, \ldots, P_r$, that are all zero values;
		%\State The indexes in $D_{\rho,d_j}$ are the positions where the elements with indexes that belong to the other $r-1$ subsets of the valid partitioning $\mathcal{D} \setminus D_{\rho,d_j}$ will be assigned in the $(k+\mu)$-th column for the arrays $P_2, \ldots, P_r$;
		\EndWhile
		\State Return $P_1, \ldots, P_r$.
		%\EndProcedure
	\end{algorithmic}
\end{algorithm}

\begin{algorithm}
	\small
	\caption{$ValidPartitioning$
		\newline
		\textbf{Input}:$ValidPartitions, k, r, portion, run, step, J_{\nu}$;
		\newline
		\textbf{Output}: $\mathcal{D}_{d_j} = \{D_{1,d_j},\ldots,D_{r,d_j}\}$.}
	\label{Valid}	
	\begin{algorithmic}[1]
		%\Procedure{%Access all elements from the $r^{j-1}$ rows of the non-failed systematic nodes where the elements from the $r^{j-1}$ are written}{}
		%\For{$j=1$ to $k \setminus l$}
		\State Set $D = \{1,\ldots,\alpha\}$;
		\If{$run \ne 0$}
		\State Find $\mathcal{D}_{d_j}$ that satisfies Condition 1 and Condition 2;
		\Else
		\State Find $\mathcal{D}_{d_j}$ that satisfies Condition 2;
		\EndIf
		\State Return $\mathcal{D}_{d_j}$.
	\end{algorithmic}
\end{algorithm}

We illustrate the importance of \emph{Condition 1} and \emph{Condition 2} by revisiting the example with the $(9, 6)$ for $\alpha=9$.

\emph{\textbf{Example 1. (cont.)}}
	We analyze three different partitions $D_1$, $D_2$, and $D_3$, given in Table \ref{Comparison3Partitions}, that present the subsets for both $J_1$ and $J_2$. The partition $D_1$ is a \emph{valid partition} since both \emph{Condition 1} and \emph{Condition 2} are satisfied. The partition $D_2$ complies only with \emph{Condition 2} since none of the subsets in $\{\{1,2,3\},\{4,5,6\},\{7,8,9\}\}$ for the nodes in $J_1$ is equal to the subsets $\{\{1,5,9\},\{2,6,7\},\{3,4,8\}\}$ for the nodes in $J_2$, but it is not obtained by using regular $run$ and $step$ values generated by Alg. 2. Finally, the partition $D_3$ does not comply neither to \emph{Condition 1} nor \emph{Condition 2} since there are no regular $run$ and $step$, and the same subset $\{7,8,9\}$ is present for the nodes from both $J_1$ and $J_2$, i.e. in $\{\{1,2,3\},\{4,5,6\},\{7,8,9\}\}$ and in $\{\{1,3,5\},\{2,4,6\},\{7,8,9\}\}$. 
	As a consequence the average repair bandwidth of the code produced with partition $D_1$ is $2.67$ which is the lowest (and equal to the bandwidth in Eq. \ref{optimal}) compared to the repair bandwidths for a $(9, 6)$ code produced with partitions $D_2$ and $D_3$.
	\begin{figure}
		\centering
		\includegraphics[width=3.5in]{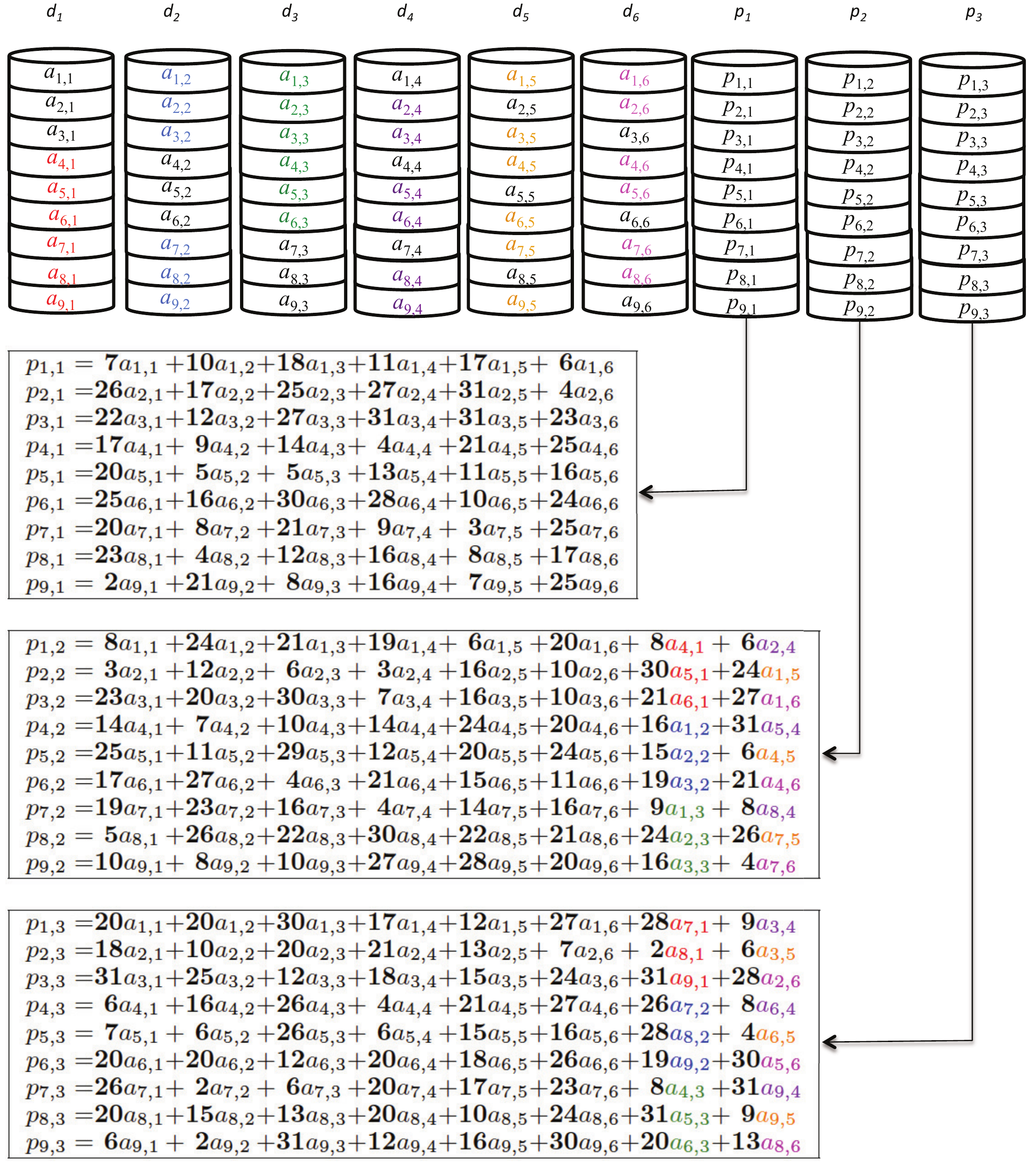}
		\vspace{-0.3cm}
		\caption{MDS array code with 6 systematic and 3 parity nodes for $\alpha=9$. The elements presented in colors are scheduled as additional elements in $p_2$ and $p_3$. The coefficients are from $\mathbf{F}_{32}$ with irreducible polynomial $x^5+x^3+1$.}
		\label{9_6_9}
		\vspace{-0.4cm}
	\end{figure}

\begin{table*}[t]
	\caption{Comparison of three partitions for a $(9, 6)$ code for $\alpha=9$ where the partitions for $J_1$ and $J_2$ are given.}\label{Comparison3Partitions}
	\vspace{-0.3cm}
	\centering
	\begin{tabular}{|c|c|c|c|c|}
		\hline	\rule{0pt}{3ex} 		
		Partitions & \parbox{2.4cm}{\vspace{0.1cm} \emph{Condition 1}\vspace{0.1cm} } & \emph{Condition 2} & \emph{ValidPartition} & \parbox{2.0cm}{\vspace{0.1cm} Avg. repair bw. \vspace{0.1cm} }\\
		\hline  \rule{0pt}{3ex}
		\parbox{8.1cm}{\vspace{0.1cm} \hspace{-0.3cm} $D_1 = \hspace{0.2cm} \{\{\mathcal{D}_{d_1} = \mathcal{D}_{d_2} = \mathcal{D}_{d_3}\}, \hspace{0.9cm} \{ \mathcal{D}_{d_4} = \mathcal{D}_{d_5} = \mathcal{D}_{d_6}\}\}= \\ =\{\{\{1,2,3\},\{4,5,6\},\{7,8,9\}\},\ \ \{\{1,4,7\},\{2,5,8\},\{3,6,9\}\}\}\\ \textcolor{white}{..........}\hspace{0.15cm} \downarrow \hspace{1cm} \downarrow \hspace{1cm} \downarrow \hspace{1.4cm} \downarrow \hspace{1cm} \downarrow \hspace{1cm} \downarrow\\
			\textcolor{white}{hahah}\hspace{0.01cm} D_{\rho,d_1} \hspace{0.4cm} D_{\rho,d_2}  \hspace{0.4cm}  D_{\rho,d_3}  \hspace{0.8cm} D_{\rho,d_4}  \hspace{0.4cm} D_{\rho,d_5}  \hspace{0.4cm} D_{\rho,d_6} \ \ \  $ 
		 \vspace{0.05cm} } & \parbox{2.4cm}{\vspace{0.1cm} \ \ \ \ \ \ \ \ \ Y \\ $run=3$, $step=0$ for $J_1$;\\$run=1$, $step=2$ for $J_2$.  \vspace{0.1cm} } & Y & Y & 2.67\\
		\hline \rule{0pt}{3ex} 
		\parbox{8.1cm}{\vspace{0.1cm} \hspace{-0.3cm} $D_2=\{\{\{1,2,3\},\{4,5,6\},\{7,8,9\}\},\ \ \{\{1,5,9\},\{2,6,7\},\{3,4,8\}\}\}$ 
	 \vspace{0.1cm} } & N & Y & N & 3.00\\
		\hline \rule{0pt}{3ex}
		\parbox{8.1cm}{\vspace{0.1cm} \hspace{-0.3cm} $D_3=\{\{\{1,2,3\},\{4,5,6\},\{7,8,9\}\},\ \ \{\{1,3,5\},\{2,4,6\},\{7,8,9\}\}\}$ 
			\vspace{0.1cm} } & N & N & N & 3.26\\
		\hline
	\end{tabular}
\end{table*}

Once the index arrays $P_1, \ldots, P_r$ are determined, the symbols $p_{i,l}$ in the parity nodes, $1 \leq i \leq \alpha$ and $1 \leq l \leq r$, are generated as a combination of the elements $a_{j_1,j_2}$ where the pair $(j_1, j_2)$ is in the $i$-th row of the index array $P_l$, i.e.
\begin{equation}\label{LinEquations}
p_{i,l}=\sum c_{l,i,j} a_{j_1,j_2}.
\end{equation}
The linear relations have to guarantee a MDS code, i.e. to guarantee that the entire information can be recovered from any $k$ (systematic or parity) nodes.

Kamath et al. in \cite{6846301} defined codes with locality as vector codes. Note that HTECs can be defined as vector codes as it is done in \cite{DBLP:journals/corr/GligoroskiKJS17}.

\subsection{MDS Property}
Next we show that there always exists a set of non-zero coefficients from $\mathbf{F}_q$ in the linear combinations given in Eq. (\ref{LinEquations}) so that a $(n, k)$ HTEC is MDS.
We adapt Theorem 4.1 from \cite{7084873} as follows:
\begin{theorem}\label{MDS}
	There exists a choice of non-zero coefficients $c_{l,i,j}$ where $l=1, \ldots, r,$ $i=1, \ldots, \alpha,$ and $j=1, \ldots, k$ from $\mathbf{F}_q$ such that the code is MDS if $q \geq \binom {n} {k} r \alpha$.
\end{theorem}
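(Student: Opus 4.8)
The plan is to recast the MDS requirement as a finite family of determinant non-vanishing conditions in the unknown coefficients $c_{l,i,j}$, and then to invoke a Schwartz--Zippel / union-bound argument to guarantee a simultaneous solution over a sufficiently large field. Concretely, a $(n,k)$ array code with sub-packetization $\alpha$ is MDS exactly when every set $S$ of $k$ of the $n$ nodes determines the whole file. There are $\binom{n}{k}$ such sets, and for each the reconstruction reduces to solving a linear system whose matrix is built from the index arrays $P_1,\ldots,P_r$ via Eq. (\ref{LinEquations}). Treating each $c_{l,i,j}$ as a formal variable, every one of these $\binom{n}{k}$ decodability conditions becomes the requirement that a determinant, which is a polynomial in the $c_{l,i,j}$, does not vanish.

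First I would bound the degree of each such determinant. If $S$ contains $s$ systematic and $k-s\le r$ parity nodes, the $s\alpha$ observed data symbols are known outright, and the unknowns are the $(k-s)\alpha$ erased data symbols, recovered from the $(k-s)\alpha$ parity equations supplied by the observed parity nodes. The relevant matrix is therefore $(k-s)\alpha \times (k-s)\alpha$, its entries are coefficients $c_{l,i,j}$ (or zero), and its determinant has total degree at most $(k-s)\alpha \le r\alpha$ in the variables. I would also append the trivial constraints $c_{l,i,j}\neq 0$, so that the selected coefficients are genuinely non-zero as the statement requires; each of these is a degree-one non-vanishing condition contributing only lower-order terms to the final count.

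The key step is to show that each of the $\binom{n}{k}$ determinants is not the zero polynomial. For this I would exhibit, for every erasure pattern, a single surviving monomial in the determinant expansion, i.e. a permutation (a system of distinct representatives) pairing each erased data symbol with a parity equation in which it occurs with a non-zero coefficient, so that the corresponding diagonal product is an uncancelled non-zero monomial. This is precisely where the structure of the HTEC index arrays enters: the \emph{valid partition} scheduling distributes the missing symbols across the equations $p_{i,l}$ so that such a transversal always exists. Since the theorem is framed as an adaptation of Theorem 4.1 of \cite{7084873}, I would transport that paper's non-degeneracy argument to the present index-array setting, checking that the HashTag scheduling preserves the hypotheses that force each determinant to be non-identically-zero. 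I expect this transfer to be the \textbf{main obstacle}, as it is the only place where the specific construction, rather than generic linear algebra, is genuinely used.

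Finally I would close with the counting. By the Schwartz--Zippel lemma a non-zero polynomial of total degree at most $r\alpha$ over $\mathbf{F}_q$ vanishes on at most a fraction $r\alpha/q$ of its domain. Taking a union bound over the $\binom{n}{k}$ determinant conditions, the assignments failing at least one condition occupy a relative fraction of at most $\binom{n}{k}\,r\alpha/q$ (the finitely many $c_{l,i,j}\neq 0$ constraints only tighten the constant). Whenever $q \ge \binom{n}{k}\,r\alpha$ this fraction drops below one, so a simultaneously admissible assignment of non-zero coefficients exists, and the resulting code is MDS, which establishes the stated field-size bound.
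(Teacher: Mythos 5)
Your proposal is correct and follows essentially the same route as the paper: the paper likewise reduces the MDS property to solvability of the linear systems arising from each of the $\binom{n}{k}$ choices of $k$ surviving nodes, and then appeals to Theorem 4.1 of \cite{7084873} for the existence of coefficients when $q \geq \binom{n}{k} r \alpha$ --- which is precisely the determinant-degree plus Schwartz--Zippel/union-bound counting you spell out. If anything, the paper's proof is sketchier than yours: it never writes down the determinant conditions or the counting explicitly, and it defers the non-degeneracy of the determinants (the step you correctly flag as the only genuinely construction-specific obstacle) entirely to the cited theorem.
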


\begin{proof} 
	\ The system of linear equations given in Eq. (\ref{LinEquations}) defines a system of $r \times \alpha$ linear equations with $k \times \alpha$ variables.
	A repair of one failed node is given in Alg. 4 but for the sake of this proof, we explain the repair by discussing the solutions of the system of equations. When one node has failed, we have an overdetermined system of $r \times \alpha$ linear equations with $\alpha$ unknowns.
	In general this can lead to a situation where there is no solution.
	However, since the values in system (\ref{LinEquations}) are obtained from the values of the lost node, we know that there exists one solution. Thus, solving this system of $r \times \alpha$ linear equations with an overwhelming probability gives a unique solution, i.e. the lost node is recovered.
	When 2 nodes have failed, we have a system of $r \times \alpha$ linear equations with $2\alpha$ unknowns.
	The same discussion for the overdetermined system applies here.
	The most important case is when $r=n-k$ nodes have failed.
	In this case, we have a system of $r \times \alpha$ linear equations with $r \times \alpha$ unknowns.
	If the size $q$ of the finite field $\mathbf{F}_q$ is large enough, i.e. $q \geq \binom {n} {k} r \alpha$, as it is shown in Theorem 4.1 in \cite{7084873}, the system has a unique solution, i.e. the file $M$ can be collected from any $k$ nodes.
\end{proof}

%\begin{corollary}
%	Adding extra symbols in the rows of $P_2, \ldots, P_r$ does not decrease the minimum distance of an $(n, k)$ code.
%\end{corollary}

From Theorem 1, it stands that HTECs as any other MDS codes are storage-reliability optimal meaning that they offer tolerance for $r$ arbitrary failures for the consumed storage.

%Note that we do not address the problem of the size of the finite field. For the sake of the correctness of Theorem 1 we use the work in \cite{7084873} where the lower bound for the size of the finite field is relatively big. On the other hand, in all examples in this paper we actually work with very small finite fields ($\mathbf{F}_{16}$ and $\mathbf{F}_{32}$). We leave the problem of determining the lower bound of the size of the finite field for HTECs as an open problem.

\iffalse
Before discussing sequential and non-sequential reads, first we will explain the meaning of each. Draw on figure 1, seq. and non-seq. 
Whenever there is a seek to a new position, the first read is always non-sequential. If the next read is contiguous, then the second read is counted as sequential. Consequently, the number of sequential reads is  and non-sequential reads 

In all scenarios we consider that $d=n-1$, $t<k$. If it is equal we have the same as RS.
Consider a file of size $M = k \alpha$ symbols from a finite field $\mathbf{F}_q$ stored in $k$ systematic nodes $d_j$ of capacity $\alpha$ symbols.
We use the following terms and variables:
\fi

\subsection{Repairing from a Single Systematic Failure}
From practitioner's point of view, the repair process first reads a set of $\lceil\sfrac{\alpha}{r}\rceil$ rows from the first parity node and the non-failed systematic nodes, and repairs only $\lceil\sfrac{\alpha}{r}\rceil$ elements from the failed systematic node. The essence of the algorithm is that the already read set is reused for repair of all subsequent elements. 
Alg. 4 shows how to repair a single systematic node where the systematic and the parity nodes are global variables.
A set of $\lceil \sfrac{\alpha}{r} \rceil$ symbols is accessed and transferred from each of $n-1$ helpers. If $\alpha \ne r^{\lceil \sfrac{k}{r} \rceil}$, then additional elements may be required as described in Step 4.
Note that a specific element is transferred just once and stored in a buffer. For every subsequent use of that element, the element is read from the buffer and a further transfer operation is not required.
The repair process is highly parallel because a set of $\lceil \sfrac{\alpha}{r} \rceil$ symbols is independently and in parallel repaired in Step 2, and then the remaining symbols are recovered in parallel in Step 5.
%Clearly, $portion$ elements per non-failed node are transmitted when $\alpha=r^{\lceil \frac{k}{r} \rceil}$.
\vspace{-0.2cm}
\begin{algorithm}\label{repairS}
	\caption{Repair of a systematic node $d_l$
		\newline
		\textbf{Input:} $l$;
		\newline
		\textbf{Output:} $d_l$.}
	\label{Repair}
	\begin{algorithmic}[1]
		%\Procedure{%Access all elements from the $r^{j-1}$ rows of the non-failed systematic nodes where the elements from the $r^{j-1}$ are written}{}
		%\For{$j=1$ to $k \setminus l$}
		\State Access and transfer $(k-1) \lceil \sfrac{\alpha}{r}\rceil$ elements $a_{i,j}$ from all $k-1$ non-failed systematic nodes and $\lceil \sfrac{\alpha}{r}\rceil$ elements $p_{i,1}$ from $p_1$, where $i \in D_{\rho,d_l}$;
		\State Repair $a_{i,l}$, where $i \in D_{\rho,d_l}$;
		\State Access and transfer $(r-1)\lceil \sfrac{\alpha}{r}\rceil$ elements $p_{i,j}$ from $p_2, \ldots, p_{r}$, where $i \in D_{\rho,d_l}$;
		\State Access and transfer from the systematic nodes the elements $a_{i,j}$ listed in the $i-$th row of the arrays $P_2, \ldots, P_r$, where $i \in D_{\rho,d_l}$, that have not been read in Step 1;
		%	\If{$\alpha \ne r^{\lceil \frac{k}{r} \rceil}$}
		%		\State Access and transfer the additional required elements
		%	\EndIf
		\State Repair $a_{i,l}$, where $i \in \mathcal{D} \setminus D_{\rho,d_l}$.
		%\EndProcedure
		%\EndFor
	\end{algorithmic}
\end{algorithm}

\subsection{Repair Bandwidth for a Single Systematic Failure}
The bandwidth optimality of the HTEC construction is captured in the following Proposition.
\begin{proposition} \label{key}
	If $r$ divides $\alpha$, then the indexes $(i, j)$ of the elements $a_{i,j}$, where $i \in \mathcal{D} \setminus D_{\rho,d_j}$, for each group of $r$ systematic nodes are scheduled in one of the $\lceil \sfrac{k}{r}\rceil$ additional columns in the index arrays $P_2,\ldots,P_r$.
\end{proposition}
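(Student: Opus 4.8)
The plan is to prove the claim by a capacity count reinforced by the disjointness that \emph{Condition 2} supplies, so that the extra pairs of an entire group fit exactly into a single appended column. First I would fix a group $J_\nu$ of $r$ systematic nodes and record the sizes involved. Because $r$ divides $\alpha$, the quantity $portion = \lceil \sfrac{\alpha}{r} \rceil = \alpha/r$ is exact, so every chosen subset $D_{\rho,d_j}$ has exactly $\alpha/r$ indexes, and each node $d_j$ in $J_\nu$ therefore contributes $|\mathcal{D} \setminus D_{\rho,d_j}| = \alpha - \alpha/r = \alpha(r-1)/r$ extra pairs $(i,j)$ that Step 15 of Alg. \ref{Encode} must place.

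Next I would count the slots available to the group. The $(k+\nu)$-th column was appended to each of the $r-1$ arrays $P_2, \ldots, P_r$, and each such column has $\alpha$ rows, so the group has $(r-1)\alpha$ zero-pair positions at its disposal. Summing the extra pairs over all $r$ nodes of $J_\nu$ gives $r \cdot \alpha(r-1)/r = (r-1)\alpha$, which equals the slot count exactly. This already shows that the additional column indexed by $k+\nu$ is neither too small nor too large: there is just enough room to absorb every extra element of the whole group, and hence no element must spill into an appended column belonging to another group.

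The step that needs the most care, and the one I expect to be the main obstacle, is collision-freeness. Here I would invoke \emph{Condition 2} in its $r \mid \alpha$ form, which states that the chosen subsets satisfy $\bigcup_{d_j \in J_\nu} D_{\rho,d_j} = \{1, \ldots, \alpha\}$ with the $D_{\rho,d_j}$ pairwise disjoint. Thus every row index $i \in \{1, \ldots, \alpha\}$ is \emph{owned} by precisely one node $d_j$ of the group, namely the one with $i \in D_{\rho,d_j}$. By Step 15, node $d_j$ writes its extra pairs only into the rows listed in $D_{\rho,d_j}$ across the $(k+\nu)$-th column of $P_2, \ldots, P_r$; since each such row offers $r-1$ entries and $|D_{\rho,d_j}|\,(r-1) = (\alpha/r)(r-1) = \alpha(r-1)/r$ matches the number of extra pairs of $d_j$, the assignment for that node is exact and repeats no row. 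Because the owning sets are disjoint, two distinct nodes of $J_\nu$ never contend for the same row, so the writes of the whole group neither overlap nor leave a zero pair behind.

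Finally I would assemble the pieces: each node's extra pairs land in the single column $k+\nu$, that column is filled exactly, and distinct groups use distinct appended columns $k+1, \ldots, k+\lceil \sfrac{k}{r} \rceil$. Consequently the indexes $(i,j)$ of the elements with $i \in \mathcal{D} \setminus D_{\rho,d_j}$ are, group by group, scheduled in one of the $\lceil \sfrac{k}{r} \rceil$ additional columns of $P_2, \ldots, P_r$, which is the assertion. The only genuinely delicate point is to keep the two independent factors of $r-1$ apart, one counting the parity arrays that carry an appended column and the other being the complement size $\alpha(r-1)/r$ per node, and to see that \emph{Condition 2} is exactly what forces them to reconcile.
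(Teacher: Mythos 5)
Your proof is correct and follows the same route as the paper, whose entire argument is the one-line remark that the result follows from ``a simple counting strategy'' of the indexes $(i,j)$ with $i \in \mathcal{D} \setminus D_{\rho,d_j}$. You have simply carried out that counting in full --- matching the $(r-1)\alpha$ extra pairs of a group against the $(r-1)\alpha$ slots of the $(k+\nu)$-th appended column, and using the disjointness clause of \emph{Condition 2} (which holds exactly when $r$ divides $\alpha$) to rule out collisions --- so your write-up is a faithful, more detailed version of the paper's intended argument.
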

\begin{proof}
	\ The proof is a simple counting strategy of all indexes $(i, j)$ of the elements $a_{i,j}$, where $i \in \mathcal{D} \setminus D_{\rho,d_j}$. 
\end{proof}

%We use the notation $\gamma$ to denote the amount of data accessed and transferred during a node repair, because these codes are access-optimal. 
%\textbf{Due to its generality, Algorithm \ref{Constr} is quite complex. To shed a light on the peculiarities of the algorithm, we give an example in Section \ref{example}. Moze i ova da izbrise.}
\begin{proposition} \label{bw}
	The bandwidth for repair of a single systematic node is bounded between the following lower and upper bounds:
	\begin{equation}
	\label{LowerUpperBounds}
	\frac{(n-1)}{r} \leq \gamma \leq \frac{(n-1)}{r} + \frac{(r-1)}{\alpha} \Bigl\lceil \frac{\alpha}{r} \Bigr\rceil
	\Bigl\lceil \frac{k}{r} \Bigr\rceil.
	\end{equation}
\end{proposition}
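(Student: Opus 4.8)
The plan is to read $\gamma$ as the number of symbols accessed during one run of the repair procedure (Algorithm~\ref{Repair}), normalized by the node capacity $\alpha$, and to split that count into a \emph{base} part (Steps 1 and 3) and a \emph{surplus} part (Step 4). For the lower bound I would first note that Steps 1 and 3 access exactly $\lceil\alpha/r\rceil$ symbols from each of the $n-1$ helpers: $\lceil\alpha/r\rceil$ from each of the $k-1$ surviving systematic nodes, $\lceil\alpha/r\rceil$ from $p_1$, and $\lceil\alpha/r\rceil$ from each of $p_2,\ldots,p_r$, which is $(k-1)+1+(r-1)=n-1$ helpers in all. Since Step 4 can only add reads, the total is at least $(n-1)\lceil\alpha/r\rceil\geq(n-1)\alpha/r$ symbols, so dividing by $\alpha$ gives $\gamma\geq\frac{n-1}{r}$. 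This coincides with the information-theoretic minimum in (\ref{optimal}), which applies because HTECs are MDS by Theorem~\ref{MDS}, and it is met with equality exactly when $r\mid\alpha$ and Step 4 is vacuous (the access-optimal regime $\alpha=r^{\lceil k/r\rceil}$).

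For the upper bound I would bound the Step 4 surplus using the structural statement of Proposition~\ref{key}. In Step 4 we fetch, for each of $P_2,\ldots,P_r$, the entries in the rows $i\in D_{\rho,d_l}$ that were not already obtained in Step 1. The entries sitting in the original $k$ columns of these arrays are the row-$i$ systematic symbols $a_{i,j}$; for $j\neq l$ every such symbol is already read in Step 1, so the original columns contribute nothing new. Consequently every genuinely new symbol lives in one of the $\lceil k/r\rceil$ appended columns, which by Proposition~\ref{key} are precisely where the surplus indexes $i\in\mathcal{D}\setminus D_{\rho,d_j}$ are scheduled. Each appended column holds at most one entry per row, hence at most $\lceil\alpha/r\rceil$ new symbols over the $\lceil\alpha/r\rceil$ rows of $D_{\rho,d_l}$; summing over the $r-1$ arrays and their $\lceil k/r\rceil$ appended columns bounds the Step 4 cost by $(r-1)\lceil k/r\rceil\lceil\alpha/r\rceil$ symbols. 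Adding the base cost $(n-1)\lceil\alpha/r\rceil$ and normalizing by $\alpha$ (with $\lceil\alpha/r\rceil/\alpha=1/r$ when $r\mid\alpha$) yields $\gamma\leq\frac{n-1}{r}+\frac{(r-1)}{\alpha}\lceil\frac{\alpha}{r}\rceil\lceil\frac{k}{r}\rceil$.

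The main obstacle I anticipate is the bookkeeping in Step 4. One must argue that no symbol is double-counted — the buffering remark preceding Algorithm~\ref{Repair} guarantees each distinct symbol is transferred at most once — and that the surplus is confined to the appended columns so that the original columns are fully absorbed by Step 1. The crux is therefore the two-part counting argument: using Proposition~\ref{key} to localize the surplus to the $\lceil k/r\rceil$ appended columns of each $P_l$, and bounding each such column by $\lceil\alpha/r\rceil$ new entries at the repair rows. Once that is in place the remaining normalization is routine, with the clean leading term $\frac{n-1}{r}$ resting on the divisibility $r\mid\alpha$ assumed in Proposition~\ref{key}.
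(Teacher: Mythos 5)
Your proposal matches the paper's own proof: both derive the lower bound by counting the $(n-1)\lceil\alpha/r\rceil$ base reads of Steps 1 and 3 of Alg.~4 (dropping Step 4), and both obtain the upper bound by confining the Step-4 surplus to the $\lceil k/r\rceil$ appended columns of $P_2,\ldots,P_r$, giving at most $(r-1)\lceil k/r\rceil\lceil\alpha/r\rceil$ extra symbols of size $1/\alpha$ each. Your write-up is in fact slightly more explicit than the paper's — you justify via Proposition~\ref{key} and the buffering remark why the original $k$ columns contribute nothing new in Step 4, and you flag the $r\mid\alpha$ normalization needed for the clean $\frac{n-1}{r}$ leading term, both of which the paper leaves implicit.
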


\begin{proof}
	\ Note that we read in total $k \lceil \sfrac{\alpha}{r}\rceil$ elements in Step 1 of Alg. 4. Additionally, $(r-1)\lceil \sfrac{\alpha}{r}\rceil$ elements are read in Step 3. Assuming that we do not read more elements in Step 4 and every element has a size of $\frac{1}{\alpha}$, we determine the lower bound as $\frac{(n-1)}{r}$. This bound is the same as the one given in Eq. (\ref{optimal}).
	To derive the upper bound, we assume that we read all elements $a_{i,j}$ from the extra $\lceil \sfrac{k}{r}\rceil $ columns of the arrays $P_2, \ldots, P_r$ in Step 4. Thus, the upper bound is $\frac{(n-1)}{r} + \frac{(r-1)}{\alpha} \Bigl\lceil \frac{\alpha}{r} \Bigl\rceil
	\Bigl\lceil \frac{k}{r}\Bigr\rceil$.
\end{proof}

HTECs are optimal in terms of the storage-bandwidth tradeoff for $\alpha=r^{\lceil\sfrac{k}{r}\rceil}$. In this case, HTECs achieve the bound of the repair bandwidth given in Eq. (\ref{optimal}). In all other cases, HTECs are near-optimal in terms of the storage-bandwidth tradeoff. Although HTECs are near-optimal for $\alpha < r^{\lceil\sfrac{k}{r}\rceil}$, they still achieve the lowest repair bandwidth compared to other representative codes from the literature as it is shown in Section \ref{anal}.

\begin{proposition}
	The recovery bandwidth is equal for all systematic nodes when $\alpha=r^{\lceil \sfrac{k}{r} \rceil}$. 
\end{proposition}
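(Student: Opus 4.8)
The plan is to show that at $\alpha = r^{\lceil k/r \rceil}$ every systematic node is repaired with bandwidth exactly equal to the lower bound $\frac{n-1}{r}$ of Proposition \ref{bw}; since that bound is independent of which node fails, equality of the recovery bandwidth across all systematic nodes is then immediate. By Proposition \ref{bw} the repair cost is always at least $\frac{n-1}{r}$ and exceeds it only through elements fetched in Step 4 of Alg. \ref{Repair}, so the whole argument reduces to proving that for every failed systematic node $d_l$ Step 4 reads nothing new.

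First I would record that $\alpha = r^{\lceil k/r \rceil}$ forces $r \mid \alpha$ and $\lceil \alpha / r^{\nu} \rceil = r^{\lceil k/r \rceil - \nu}$, so that the granulation parameter $run$ in Alg. \ref{Encode} reaches $1$ exactly when the last group $J_{\lceil k/r \rceil}$ is processed: the algorithm stays in Phase 1 throughout and Proposition \ref{key} applies to every group. Next I would expose the product structure of the valid partitions at this $\alpha$. Writing $m = \lceil k/r \rceil$ and indexing each row $i \in \{1,\dots,\alpha\}$ by its base-$r$ digits $i = (i_1,\dots,i_m)$, the valid partition used for group $J_\nu$ is the partition of the rows by the value of the $\nu$-th digit $i_\nu$, and the block $D_{\rho,d_j}$ carried by the $\rho$-th node of $J_\nu$ is $\{i : i_\nu = \rho-1\}$. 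The scheduling steps of Alg. \ref{Encode} then place an extra element $a_{i',j}$ of a node $d_j \in J_\nu$ into the $(k+\nu)$-th column at the row $i$ that agrees with $i'$ in every digit except the $\nu$-th. Verifying this last combinatorial fact from Alg. \ref{Valid} is the step I expect to be the main obstacle, as it is precisely what pins the construction at this sub-packetization to a Ye--Barg-type product form.

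With that structure in hand, the emptiness of Step 4 is direct. Repairing $d_l$, say the $\rho_0$-th node of group $J_{\nu_0}$, uses exactly the rows $i$ with $i_{\nu_0} = \rho_0 - 1$. An extra element $a_{i',j}$ sitting in such a row of the $(k+\nu')$-th column falls into one of two cases. If $\nu' = \nu_0$, the block owning that row is $D_{\rho_0,d_l}$ itself, so $a_{i',j}$ is an unknown of the failed node and is produced in Steps 2 and 5 rather than read. If $\nu' \neq \nu_0$, then $i$ and $i'$ agree in every digit except the $\nu'$-th, hence $i'_{\nu_0} = i_{\nu_0} = \rho_0 - 1$, so $i' \in D_{\rho_0,d_l}$ and $a_{i',j}$ was already fetched in Step 1. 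In either case Step 4 contributes nothing, the bandwidth equals $\frac{n-1}{r}$ for every $l$, and the recovery bandwidth is identical for all systematic nodes.
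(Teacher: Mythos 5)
Your proposal is correct in substance, but it takes a genuinely different route from the paper. The paper's own proof is a short symmetry argument: it observes that at $\alpha = r^{\lceil k/r \rceil}$ Algorithm 2 produces index arrays whose $run$/$step$ schedule (starting at $run = \alpha/r$, $step = 0$ and ending at $run = 1$, $step = \alpha/r - 1$) is ``completely symmetric'' across the groups $J_\nu$, and asserts that this symmetry of the linear dependencies in Eq.~(2) forces the same bandwidth --- the same number of symbols read from each of the $n-1$ helpers --- for every systematic node. You instead prove equality by exhibiting the common value: every node attains the lower bound $\tfrac{n-1}{r}$ of Proposition 2, which you reduce to the claim that Step 4 of Algorithm 4 reads nothing, and you derive that from an explicit base-$r$ digit (product) structure of the valid partitions and of the placement of the extra elements. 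Your two-case analysis ($\nu' = \nu_0$ versus $\nu' \neq \nu_0$) is exactly the mechanism that lets Steps 2--5 of Algorithm 4 close without further reads, and your argument buys more than the paper's: it simultaneously re-derives the access-optimality at $\alpha = r^{\lceil k/r \rceil}$ that the paper asserts separately, whereas the paper's symmetry argument yields equality without computing the value. The ``main obstacle'' you flag --- that the scheduling places $a_{i',j}$ in the row agreeing with $i'$ in every digit except the $\nu$-th --- is a real gap in the sense that Step 15 of Algorithm 2 only says the extras of $d_j$ land in the rows indexed by $D_{\rho,d_j}$ and never specifies which extra goes to which row; however, a counting argument shows this placement is the unique one (up to the choice among $P_2,\dots,P_r$) compatible with access-optimality, since the slots in rows $D_{\rho_0,d_l} \cap D_{\rho_j,d_j}$ exactly match in number the extras of $d_j$ whose $\nu_0$-th digit equals $\rho_0 - 1$, and it is confirmed by the paper's $(9,6)$, $\alpha = 9$ example. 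Note also that the paper's proof glosses over this identical point with the word ``symmetric,'' so your flagged step reflects an underspecification of the algorithm rather than a defect of your reasoning; modulo that shared caveat, your proof is the more rigorous of the two.
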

\begin{proof}
	\ When $\alpha=r^{\lceil \sfrac{k}{r} \rceil}$, Alg. 2 produces index arrays $P_1, \ldots, P_r$ where the distribution of the indexes from all systematic nodes is completely symmetric. The distribution of indexes always starts with $run = \sfrac{\alpha}{r}$ and $step=0$, and it ends with $run =\sfrac{\alpha}{r^{\lceil{\sfrac{k}{r}\rceil}}}=1$ and $step=\sfrac{\alpha}{r}-1$. That symmetry reflects to the linear dependencies in Eq. (\ref{LinEquations}) for each of the parity elements which further implies that the recovery bandwidth is symmetrical, i.e. equal for all systematic nodes. In order to repair any systematic node, the same amount of symbols is accessed from all $n-1$ nodes.
\end{proof}
This is illustrated with the examples for repairing a systematic node with a $(9, 6)$ HTEC for $\alpha=6$ and $9$ in Section 3.2. The distribution of the indexes in the $(9, 6)$ HTEC for $\alpha=9$ is symmetric as shown in Section 3.3, and the repair bandwidth for all systematic nodes is the same (24MB). 
On the other hand, the repair bandwidth for the systematic nodes differs with the $(9, 6)$ HTEC for $\alpha=6$. Namely, the repair bandwidth is 27MB for $d_1, d_3, d_4$ and $d_6$, while it is 30MB for $d_2$ and $d_5$. The scheduling of the indexes and the repair process for the $(9, 6)$ for $\alpha=6$ is thoroughly explained in Section \ref{explanation}.

\subsection{Repairing from Multiple Systematic Failures}
The same ideas for single failure repair apply to repair from multiple failures but here a larger set of rows is read. In the worse case when the number of failed nodes is $r$, then the data from all non-failed nodes is read.
Alg. 5 shows how to find a minimal system of linear equations to repair from $t$ failures, where $1\leq t\leq r$, with minimal bandwidth. Data from all $n-t$ non-failed nodes is accessed and transferred. 
%Additionally, a failed systematic node is allowed to cooperate with the $t-1$ failed systematic nodes that are repairing the data at the same time.
%The systematic and the parity nodes are global variables. 
The sets $N$ and $T$ consist of the indexes of all systematic nodes and the failed systematic nodes, respectively.
Note that for $t=1$ the amount of accessed and transferred data is the same with both Alg. 4 and Alg. 5.

\begin{algorithm}\label{repairM}
	\caption{Repair of $t$ systematic nodes, where $1\leq t \leq r$
		\newline
		\textbf{Input:} $T=\{l_1,\ldots,l_t\}$, where $T\subset N$ and $|T|=t$;
		\newline
		\textbf{Output:} Data from all $d_l$, where $l \in T$.}
	\label{Repair}
	\begin{algorithmic}[1]
		\For{each $l\in T$}
		\State Select equations $p_{i,l}$, where $i \in D_{\rho,d_l}$, from the parity nodes $p_1,\ldots, p_{r}$;
		\EndFor
		\While {The set of selected equations do not involve all 
			\phantom . \phantom . \phantom . \phantom . \phantom . \phantom . \phantom . missing $t\times\alpha$ elements $a_{i,l}$, where $i=1,\ldots,\alpha$ \phantom . \phantom . \phantom . \phantom . \phantom . \phantom . \phantom . \phantom . and $l\in T$} 
		\State \phantom . \phantom . Select equation $p_{i,j}$, where $i \in \mathcal{D} \setminus \cup_{j=l_1}^{l_t}D_{\rho,d_j}$, that \phantom . \phantom . \phantom . \phantom . \phantom . \phantom . includes maximum number of new non-included \phantom . \phantom . \phantom . \phantom . \phantom . elements $a_{i,l}$;
		%\Until{(Number of linearly independent equations is \phantom . \phantom . \phantom . \phantom . \phantom . \phantom . \phantom . \phantom . equal to $t\times\alpha$)\ \ \mbox{\bf AND} \ \ (Number of unknowns \phantom . \phantom . \phantom . \phantom . \phantom . \phantom . \phantom . \phantom . is equal to $t\times\alpha$)}
		\EndWhile
		\State Access and transfer from the available systematic nodes and from the parity nodes all elements $a_{i,j}$ and $p_{i,j}$ listed in the selected equations;
		%\EndProcedure
		\State Solve the system for $t\times\alpha$ unknowns $a_{i,l}$, where $i=1,\ldots,\alpha$ and $l\in T$;
		\State Return the data for the missing $d_l$, where $l \in T$.
	\end{algorithmic}
\end{algorithm}

\subsection{Repair Bandwidth for Multiple Systematic Failures}

%We use the notation $\gamma$ to denote the amount of data accessed and transferred during a node repair, because these codes are access-optimal. 
%\textbf{Due to its generality, Algorithm \ref{Constr} is quite complex. To shed a light on the peculiarities of the algorithm, we give an example in Section \ref{example}. Moze i ova da izbrise.}
\begin{proposition} \label{bw}
	The bandwidth to repair $t$ systematic nodes is bounded between the following lower and upper bounds:
	\begin{equation}
	\label{LowerUpperBounds}
	\frac{t}{\alpha}\Bigl\lceil\frac{\alpha}{r} \Bigr\rceil(n-t) \leq \gamma \leq k \alpha.
	\end{equation}
\end{proposition}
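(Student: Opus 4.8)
The plan is to extend the counting argument that established the single-failure bound to the multiple-failure repair procedure of Algorithm 5, and to read off the two extremes of its behaviour. Throughout I keep the node-normalized convention of the single-failure proof, in which a node holds unit data and each of the $\alpha$ symbols $a_{i,j}$, $p_{i,j}$ therefore contributes $\frac{1}{\alpha}$ to the bandwidth $\gamma$. The quantity $\gamma$ is then simply $\frac{1}{\alpha}$ times the number of distinct symbols appearing in the system of equations that Algorithm 5 finally selects; the lower bound corresponds to the best case in which the \textbf{while}-loop adds nothing beyond the initial selection, and the upper bound to the worst case in which the augmentation continues until the whole file has been touched.

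For the lower bound I would analyze the most favourable instance, in which for every failed node $l\in T$ the parity equations selected in the first step for the rows $i\in D_{\rho,d_l}$ already involve all $t\alpha$ missing unknowns and form a solvable system, so the \textbf{while}-loop never executes. Applying the single-failure count once per failed node, each of the $t$ repairs then accesses $\lceil\sfrac{\alpha}{r}\rceil$ symbols from each of the now $n-t$ available helpers; since \emph{Condition 2} forces $D_{\rho,d_{j_1}}\ne D_{\rho,d_{j_2}}$ for distinct systematic nodes, the $t$ repairs touch disjoint symbol sets at every helper, so the distinct symbol count is exactly $t(n-t)\lceil\sfrac{\alpha}{r}\rceil$ (which is compatible with a single helper holding only $\alpha$ symbols precisely because $t\le r$). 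Weighting each symbol by $\frac{1}{\alpha}$ yields $\frac{t}{\alpha}\lceil\sfrac{\alpha}{r}\rceil(n-t)$, and specializing to $t=1$ recovers the single-failure value $\frac{n-1}{r}$ of Eq.~(\ref{optimal}), a useful consistency check.

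For the upper bound I would observe that Algorithm 5 never needs to read more than the entire original data. Since $t\le r$, there remain $n-t\ge k$ non-failed nodes; by the MDS property of Theorem \ref{MDS} any $k$ complete nodes already determine the file $M$, so the augmentation loop is guaranteed to reach a uniquely solvable system after accessing at most the $k\alpha$ symbols that constitute $M$. Hence $\gamma\le k\alpha$, which is exactly the Reed-Solomon fallback cost of reconstructing everything from $k$ helpers.

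The main obstacle is the lower bound, not the upper one. Concretely, I must argue that the first selection in Algorithm 5 really is confined to $\lceil\sfrac{\alpha}{r}\rceil$ symbols per helper per failed node, and that the $t$ per-node contributions do not silently merge into a smaller count through shared indices; here the disjointness supplied by \emph{Condition 2} is the key tool for controlling the overlap between the index sets used by different failed nodes. Establishing that these initially selected equations can genuinely be completed to a uniquely solvable $t\alpha\times t\alpha$ system without exceeding the stated symbol count---so that the lower bound is actually attainable rather than a loose estimate---is the delicate step, whereas the $k\alpha$ upper bound follows immediately from the MDS guarantee.
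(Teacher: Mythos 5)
Your overall strategy is the same as the paper's: for the lower bound, identify the favourable case in which the initial selection of Alg.~5 (Steps 1--3) already yields a solvable system, and count $t(n-t)\lceil \alpha/r\rceil$ distinct symbols of size $1/\alpha$ each (your per-helper count over $n-t$ helpers is exactly the paper's split $(k-t)\lceil \alpha/r\rceil + r\lceil \alpha/r\rceil$ per failed node); for the upper bound, observe that the worst case cannot exceed the Reed--Solomon fallback of reading the whole file, i.e.\ $k\alpha$. Both halves, including the $t=1$ consistency check, match the paper's proof in substance.

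There is, however, one genuine error in your lower-bound justification: you assert that Condition~2 forces the $t$ repairs to touch disjoint symbol sets because it guarantees $D_{\rho,d_{j_1}} \neq D_{\rho,d_{j_2}}$ for distinct systematic nodes. Inequality of these index sets does not imply disjointness, and disjointness genuinely fails whenever the failed nodes come from different groups $J_\nu$: in the paper's $(9,6)$ code with $\alpha=9$, we have $D_{\rho,d_1}=\{1,2,3\}$ and $D_{\rho,d_4}=\{1,4,7\}$, which intersect in $\{1\}$, and repairing $\{d_1,d_4\}$ costs 46 symbols rather than the lower-bound value 42 --- the while-loop does execute there. The paper avoids this trap by making disjointness an explicit hypothesis defining the favourable case, namely $D_{\rho,d_{l_1}} \cap D_{\rho,d_{l_2}}=\emptyset$ for all $l_1,l_2 \in T$, rather than deriving it from Condition~2. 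Your argument is repaired by the same move: restrict the favourable instance to failure patterns whose index sets are pairwise disjoint (e.g.\ all failures within a single group $J_\nu$ when $r$ divides $\alpha$); in that case, and only in that case, the initially selected equations involve all $t\alpha$ unknowns and the distinct-symbol count is exactly $t(n-t)\lceil \alpha/r\rceil$, which is what establishes attainability of the lower bound.
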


\begin{proof}\ Note that if for all missing nodes $d_l$, where $l \in T$, it stands that the index sets $D_{\rho,d_l}$ are disjunctive, i.e. it stands that $D_{\rho,d_{l_1}} \cap D_{\rho,d_{l_2}}=\emptyset$ where $l_1, l_2 \in T$, then in Steps 1 -- 3 of Alg. 5 we will select all $t \times \alpha$ necessary equations to repair the $t$ missing nodes. In that case Alg. 5 selects the minimum number of linear equations, thus, the repair bandwidth reaches the lower bound. This means that in Step 8 we need to read in total $t(k-t)\lceil\sfrac{\alpha}{r}\rceil$ elements $a_{i,j}$ from $k-t$ systematic nodes and additionally to read $t \cdot r \lceil \sfrac{\alpha}{r}\rceil$ elements $p_{i,j}$ from $r$ parity nodes. Assuming that every element has a size of $\sfrac{1}{\alpha}$, we determine the lower bound as $\frac{t}{\alpha} ((k-t)\Bigl\lceil\frac{\alpha}{r} \Bigr\rceil + r \Bigl\lceil\frac{\alpha}{r}\Bigr\rceil) = \frac{t}{\alpha}\Bigl\lceil\frac{\alpha}{r} \Bigr\rceil(n-t)$. 
	
	Any additional selection of equations in the while loop in Steps 4 -- 6 increases the repair bandwidth and cannot exceed the upper bound that is simply the same amount of repair bandwidth as for RS codes, i.e. $k \alpha$.
\end{proof}

%\begin{proposition}\label{groupDisk}
%The order in which we assign the $D_{\rho,d_j}$ from $Valid Partitioning$ to the systematic nodes from the same $J_{\nu}$, does not have an impact on the repair bandwidth. 
%\end{proposition}

\section{Code Examples with Arbitrary Sub-packetization Levels and Multiple Failures} \label{examples}
In this Section, we give two examples for a $(9, 6)$ HTEC code for $\alpha=6$ and $\alpha=9$. %In the first example we use Alg. 1 and Alg. 2, while recovery of 2 failed systematic nodes with Alg. 3 is illustrated in the second example. 
%We choose the code $(9, 6)$ due to its relatively small size that is appropriate for presentation and the support of this code in the latest release of Hadoop.
The $(9, 6)$ code is included in the latest release of Hadoop.

\subsection{A $(9, 6)$ HTEC for $\alpha=6$} \label{explanation}
The following requirements have to be satisfied for the code to be an access-optimal MDS code that achieves the lower bound of the repair bandwidth for any systematic node:
\begin{itemize}
	\item $M=k \alpha = 36$ symbols,
	\item Repair a failed systematic node by accessing and transferring $\lceil \frac{\alpha}{r}\rceil=2$ symbols from the remaining $d=8$ nodes,
	\item Reconstruct the data from any 6 nodes.
\end{itemize}
The systematic nodes $d_1, \ldots, d_6$ and the parity nodes $p_1, p_2, p_3$ are shown in Fig. \ref{9_6_6} where each node stores $\alpha=6$ symbols. In Fig. \ref{9_6_6}, we also show the elements $p_{i,l}$ from the parity nodes that are linear combinations from the elements $a_{i,j}$ from the systematic nodes. Both the elements from the systematic nodes and the coefficients in Eq. (\ref{LinEquations}) are replaced with concrete values in Fig. \ref{9_6_6}. The elements of $p_1$ are linear combinations of the row elements from the systematic nodes multiplied with coefficients from $\mathbf{F_{16}}$. The elements of $p_2$ and $p_3$ are obtained by adding extra symbols to the row sum. 
We next show the scheduling of an element $a_{i,j}$ from a specific $d_j$, where $i \in \mathcal{D} \setminus D_{\rho,d_j}$, at $portion=2$ positions in the $i$-th row, $i\in D_{\rho,d_j}$, and the $(6+\nu)$-th column, $\nu=1, 2,$ of $P_2$ and $P_3$.
We follow the steps in Alg. \ref{Encode} and give a brief explanation:
\newline
1) Initialize $P_i$, $i=1,2,3,$ as index arrays $P_i=((i, j))_{6\times 6}$,
	$$
	P_i=\left[\scriptsize
	\begin{array}{c@{\hspace{0.5em}}c@{\hspace{0.5em}}c@{\hspace{0.5em}}c@{\hspace{0.5em}}c@{\hspace{0.5em}}c@{\hspace{0.5em}}c@{\hspace{0.5em}}c@{\hspace{0.5em}}c@{\hspace{0.5em}}c@{\hspace{0.5em}}c@{\hspace{0.5em}}c@{\hspace{0.5em}}c}
	(1,1) & (1,2) & (1,3) & (1,4) & (1,5) & (1,6)\\
	(2,1) & (2,2) & (2,3) & (2,4) & (2,5) & (2,6)\\
	(3,1) & (3,2) & (3,3) & (3,4) & (3,5) & (3,6)\\ 
	(4,1) & (4,2) & (4,3) & (4,4) & (4,5) & (4,6)\\
	(5,1) & (5,2) & (5,3) & (5,4) & (5,5) & (5,6)\\
	(6,1) & (6,2) & (6,3) & (6,4) & (6,5) & (6,6)\\
	\end{array}
	\right].
	$$
\newline	
2) Append $\lceil \sfrac{k}{r}\rceil=2$ columns to $P_2$ and $P_3$ initialized to $(0, 0)$, i.e. 
	${\scriptsize P_2=P_3=}$$$\left[\scriptsize
	\begin{array}{c@{\hspace{0.5em}}c@{\hspace{0.5em}}c@{\hspace{0.5em}}c@{\hspace{0.5em}}c@{\hspace{0.5em}}c@{\hspace{0.5em}}c@{\hspace{0.5em}}c@{\hspace{0.5em}}c@{\hspace{0.5em}}c@{\hspace{0.5em}}c@{\hspace{0.5em}}c@{\hspace{0.5em}}c}
	(1,1) & (1,2) & (1,3) & (1,4) & (1,5) & (1,6) & (0,0) & (0,0)\\
	(2,1) & (2,2) & (2,3) & (2,4) & (2,5) & (2,6) & (0,0) & (0,0)\\
	(3,1) & (3,2) & (3,3) & (3,4) & (3,5) & (3,6) & (0,0) & (0,0)\\ 
	(4,1) & (4,2) & (4,3) & (4,4) & (4,5) & (4,6) & (0,0) & (0,0)\\
	(5,1) & (5,2) & (5,3) & (5,4) & (5,5) & (5,6) & (0,0) & (0,0)\\
	(6,1) & (6,2) & (6,3) & (6,4) & (6,5) & (6,6) & (0,0) & (0,0)\\
	\end{array}
	\right].
	$$
3) Set $portion$ equal to 2 and $ValidPartitions$ to an empty set.
\newline
4) For the systematic nodes $d_1, d_2,$ and $d_3$ in $J_1$, $run=2$ and $step=0$.
 %We call the function $ValidPartitioning()$ with appropriate parameters, i.e., $run=2, step=0$ for all nodes in $J_1$. In this phase 
 \newline
5) Alg. \ref{Valid} returns $\mathcal{D}_{d_1}=\{\{1, 2\}, \{3, 4\}, \{5, 6\}\}$.
	Following Step 14 in Alg. \ref{Encode}, the first 2 zero pairs in the $7$-th column of $P_2$ with $0$ distance between them are at the positions $(1,7)$ and $(2,7)$, thus, $D_{\rho,d_1} =\{1,2\}$.
	%Both Condition 1 and Condition 2 have to be fulfilled. We use the notation $D_{\rho,d_j}, j=1, \ldots, 6$, to denote the subset with its elements corresponding to row indexes in the $(6+\nu)$-th column, $\nu = 1, 2$, in one of the arrays $P_2$ and $P_3$ that are all zero values $(0, 0)$. For the node $d_1$, we get $D_{d_1}=\{\{1,2\},\{3,4\},\{5,6\}\}$. Note that the run length is 2 and the distance between the indexes is 0. The first 2 zero elements in the 7-th column of $P_2$ are at the positions $(1,7)$ and $(2,7)$, thus, $D_{\rho,d_1} =\{1,2\}$. 
	\newline
6) We schedule the indexes of the elements of $d_1$ with $i$ indexes that are not elements of $D_{\rho,d_1}$ (written in red color in Fig. \ref{9_6_6} and in the arrays $P_2$ and $P_3$) in the 1-st and 2-nd row and 7-th column of $P_2$ and $P_3$.
	%The $i-$th index of $a_{3,1}$ and $a_{4,1}$ does not belong to $D_{\rho,d_1}$ so these elements are scheduled at the positions $(1,7)$ and $(2,7)$ of $P_2$, while the elements $a_{5,1}$ and $a_{6,1}$ are scheduled at the positions $(1,7)$ and $(2,7)$ of $P_3$.
	%The $i$ indexes of the remaining pairs $(i,1)$ where $i=4, \ldots,6$ belong to 2 other subsets $D\setminus D_{\rho,d_1}$, i.e. $D_{2,d_1}=\{3,4\}$ and $D_{3,d_1}=\{5,6\}$.  The pairs $(i,1)$ for $i\in D\setminus D_{\rho,d_1}$ are added in the 7-th column of $P_2$ and $P_3$.
		Similarly, we perform the same steps for the nodes $d_2$ and $d_3$ resulting in $D_{\rho,d_2} =\{3,4\}$ and $D_{\rho,d_3} =\{5,6\}$, respectively.
	%\item At the end of this phase $ValidPartitions=\{\{1,2\},\{3,4\},\{5,6\}\}$.
	%\item 	At the end of this phase a snapshot from three index arrays $P_1, P_2, P_3$ is like this: 
	\newline
	Next we schedule the elements from $d_4, d_5$ and $d_6$.
\newline	
7) For the nodes $d_4, d_5,$ and $d_6$ in $J_2$, $run=1$ and $step=1$.
\newline
8) We perform the same steps as for the nodes in $J_1$. Here we only give the corresponding $D_{\rho,d_j}$, i.e. $D_{\rho,d_4}=\{1,3\}$, $D_{\rho,d_5}=\{2,5\}$, and $D_{\rho,d_6}=\{4,6\}$.
\newline
9) After replacing the $(0,0)$ pairs with specific $(i,j)$ pairs, the final index arrays are:
	$$
	P_1=\left[\scriptsize
	\begin{array}{c@{\hspace{0.5em}}c@{\hspace{0.5em}}c@{\hspace{0.5em}}c@{\hspace{0.5em}}c@{\hspace{0.5em}}c@{\hspace{0.5em}}c@{\hspace{0.5em}}c@{\hspace{0.5em}}c@{\hspace{0.5em}}c@{\hspace{0.5em}}c@{\hspace{0.5em}}c@{\hspace{0.5em}}c}
	(1,1) & (1,2) & (1,3) & (1,4) & (1,5) & (1,6)\\
	(2,1) & (2,2) & (2,3) & (2,4) & (2,5) & (2,6)\\
	(3,1) & (3,2) & (3,3) & (3,4) & (3,5) & (3,6)\\ 
	(4,1) & (4,2) & (4,3) & (4,4) & (4,5) & (4,6)\\
	(5,1) & (5,2) & (5,3) & (5,4) & (5,5) & (5,6)\\
	(6,1) & (6,2) & (6,3) & (6,4) & (6,5) & (6,6)\\
	\end{array}
	\right],
	$$
	$$
	P_2=\left[\scriptsize
	\begin{array}{c@{\hspace{0.5em}}c@{\hspace{0.5em}}c@{\hspace{0.5em}}c@{\hspace{0.5em}}c@{\hspace{0.5em}}c@{\hspace{0.5em}}c@{\hspace{0.5em}}c@{\hspace{0.5em}}c@{\hspace{0.5em}}c@{\hspace{0.5em}}c@{\hspace{0.5em}}c@{\hspace{0.5em}}c}
	(1,1) & (1,2) & (1,3) & (1,4) & (1,5) & (1,6) & \textcolor{red}{(3,1)} & (2,4)\\
	(2,1) & (2,2) & (2,3) & (2,4) & (2,5) & (2,6) & \textcolor{red}{(4,1)} & (1,5)\\
	(3,1) & (3,2) & (3,3) & (3,4) & (3,5) & (3,6) & (1,2) & (5,4)\\ 
	(4,1) & (4,2) & (4,3) & (4,4) & (4,5) & (4,6) & (2,2) & (1,6)\\
	(5,1) & (5,2) & (5,3) & (5,4) & (5,5) & (5,6) & (1,3) & (3,5)\\
	(6,1) & (6,2) & (6,3) & (6,4) & (6,5) & (6,6) & (2,3) & (3,6)\\
	\end{array}
	\right],
	$$
	and
	$$
	P_3=\left[\scriptsize
	\begin{array}{c@{\hspace{0.5em}}c@{\hspace{0.5em}}c@{\hspace{0.5em}}c@{\hspace{0.5em}}c@{\hspace{0.5em}}c@{\hspace{0.5em}}c@{\hspace{0.5em}}c@{\hspace{0.5em}}c@{\hspace{0.5em}}c@{\hspace{0.5em}}c@{\hspace{0.5em}}c@{\hspace{0.5em}}c}
	(1,1) & (1,2) & (1,3) & (1,4) & (1,5) & (1,6) & \textcolor{red}{(5,1)} & (4,4)\\
	(2,1) & (2,2) & (2,3) & (2,4) & (2,5) & (2,6) & \textcolor{red}{(6,1)} & (4,5)\\
	(3,1) & (3,2) & (3,3) & (3,4) & (3,5) & (3,6) & (5,2) & (6,4)\\ 
	(4,1) & (4,2) & (4,3) & (4,4) & (4,5) & (4,6) & (6,2) & (2,6)\\
	(5,1) & (5,2) & (5,3) & (5,4) & (5,5) & (5,6) & (3,3) & (6,5)\\
	(6,1) & (6,2) & (6,3) & (6,4) & (6,5) & (6,6) & (4,3) & (5,6)\\
	\end{array}
	\right].
	$$
10) Schedule the elements $a_{i,j}$ with $(i,j)$ indexes represented in the index arrays. The parity symbols are linear combinations from the elements in the same row in the array. The coefficients for the MDS $(9, 6)$ code for $\alpha=6$ in Fig. \ref{9_6_6} are from $\mathbf{F}_{16}$ with irreducible polynomial $x^4+x^3+1$.

We next show how to repair node $d_1$ following Alg. 4. All symbols denoted by red rectangles in Fig. \ref{9_6_6} are accessed and transfered for repair of node $d_1$. First, we repair the elements $a_{1,1}, a_{2,1}$ since $1, 2 \in D_{\rho,d_1}$.
Thus, we access and transfer $a_{1,j}$ and $a_{2,j}$, where $j=2,\ldots,6$, from all 5 non-failed systematic nodes and $p_{1,1}$, $p_{2,1}$ from $p_1$.
Since $a_{3,1}, a_{4,1}$ are added as extra elements in $p_2$, we need to access and transfer $p_{1,2}$ and $p_{2,2}$ from $p_2$.
Due to the optimal scheduling of the extra elements in the parity nodes, no further elements are required to recover $a_{3,1}, a_{4,1}$.
The last two elements $a_{5,1}, a_{6,1}$ are recovered by accessing and transferring $p_{1,3}, p_{2,3}, a_{4,4}$ and $a_{4,5}$. Extra two elements are read because the sub-packetization level is not equal to 9. 
The data from $d_1$ is recovered by accessing and transferring in total 18 elements from 8 helpers.
Exactly the same amount of data (18 symbols) is needed to repair $d_3,d_4$ or $d_6$, while 20 symbols are needed to repair $d_2$ and $d_5$.
%As we can see the repair of any systematic node is done by accessing and transferring $2$ symbols from the $4$ non-failed nodes.
Thus, the average repair bandwidth is equal to 3.11 symbols.
The presented code is not optimal in terms of the repair bandwidth, i.e. an access and a transfer of more than 2 symbols from each of the non-failed nodes are required. 
Note that a $(9, 6)$ code for $\alpha=9$ is an access-optimal code.

%3., 3.33333, 3., 3., 3.33333, 3.

\begin{figure}
	\centering
	\includegraphics[width=3.5in]{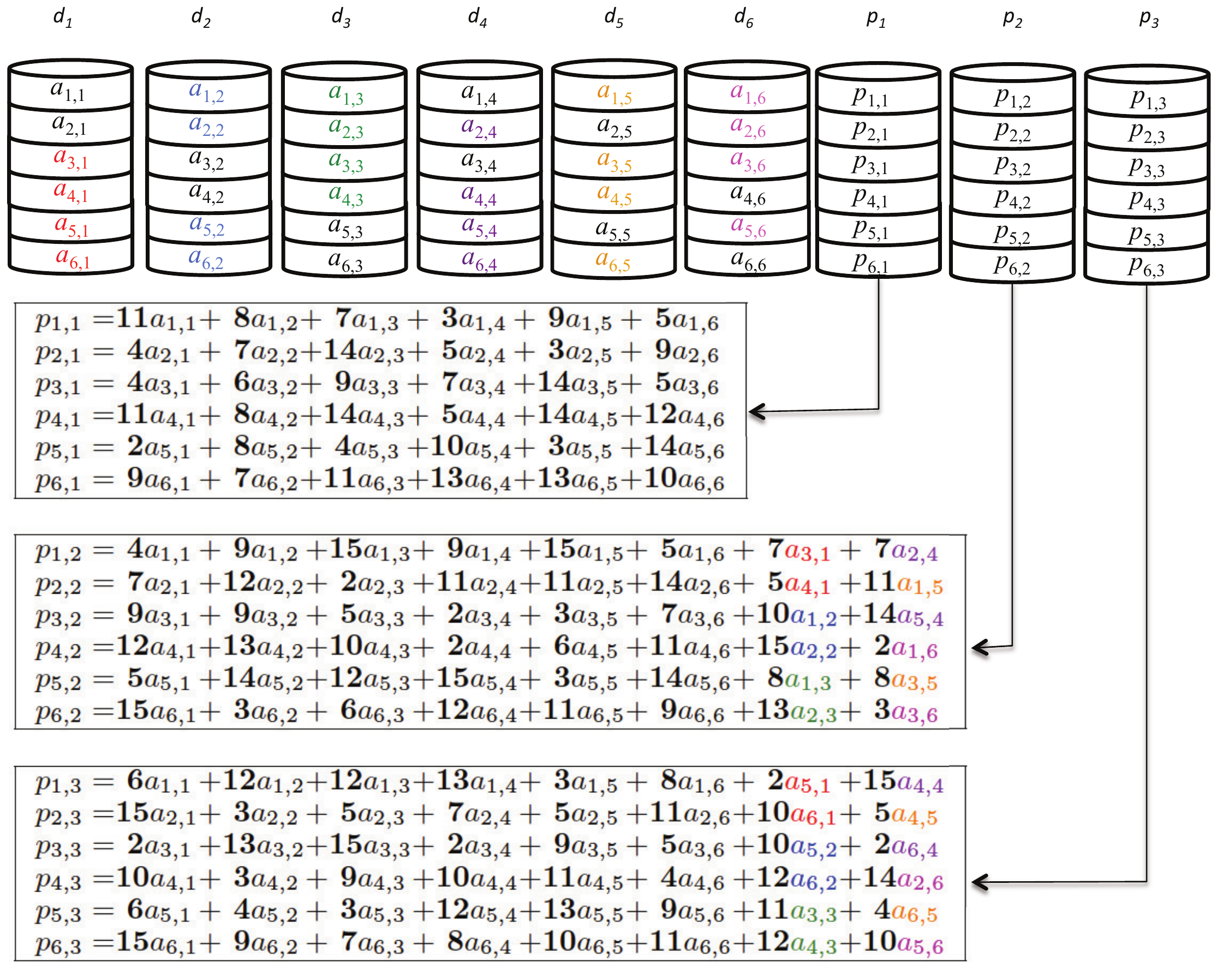}
	\vspace{-0.5cm}
	\caption{MDS array code with 6 systematic and 3 parity nodes for $\alpha=6$. The elements presented in colors are scheduled as additional elements in $p_2$ and $p_3$. The coefficients are from $\mathbf{F}_{16}$ with irreducible polynomial $x^4+x^3+1$.}
	\vspace{-0.4cm}
	\label{9_6_6}
\end{figure}

%An interested reader can find an $(5,3)$ code example in \cite{7463553} for the case when $r$ is not a divisor of $k$. 

\subsection{A $(9, 6)$ HTEC for $\alpha=9$ and Repairing from Multiple Failures}
We next show the recovery of the nodes $d_1$ and $d_3$ with the $(9,6)$ code for $\alpha=9$ (Fig. \ref{9_6_9}) from Section \ref{algorithm}. 
We first access and transfer 24 $a_{i,j}$ elements from all 4 non-failed systematic nodes and 18 $p_{i,j}$ elements from $p_1,p_2,p_3$, where $i\in D_{\rho,d_1}\cup D_{\rho,d_3}=\{1,2,3, 7,8,9\}$. In total we have accessed and transferred 42 symbols. We next check if the number of linearly independent equations is equal to 18. When $d_1$ and $d_3$ are lost, this is fulfilled so it is possible to repair 18 lost symbols from $d_1$ and $d_3$. 
Exactly the same amount of data, 42 symbols, is needed to repair any pair of lost systematic nodes $d_{l_i}$ and $d_{l_j}$ for which it stands that $D_{\rho,d_{l_i}}\cap D_{\rho,d_{l_j}}=\emptyset$. There are in total $\binom{3}{2}$ combinations of 2 failed systematic nodes from the nodes $d_1, d_2,$ and $d_3$ in $J_1$ and $\binom{3}{2}$ combinations of 2 failed systematic nodes from the nodes $d_4, d_5,$ and $d_6$ in $J_2$. 

The recovery process of the nodes $d_{l_i}$ and $d_{l_j}$ has some additional steps when $D_{\rho,d_{l_i}}\cap D_{\rho,d_{l_j}}\neq\emptyset$. This happens when one node from each of the groups $J_1$ and $J_2$ has failed.
By reading the elements from Step 2 the number of linearly independent equations $p_{i,j}$ for $i \in D_{\rho,d_{l_i}}\cup D_{\rho,d_{l_j}}$ is exhausted. Thus, we have to read $p_{i,j}$ that has not been read previously, i.e. $p_{i,j}$ where $i \in \mathcal{D} \setminus  D_{\rho,d_{l_i}}\cup D_{\rho,d_{l_j}}$.
In order to illustrate this case, let us consider the repair of $d_1$ and $d_4$.
We first access and transfer 20 elements $a_{i,j}$ from all 4 non-failed systematic nodes and 15 $p_{i,j}$ elements from $p_1,p_2,p_3$, where $i\in D_{\rho,d_1}\cup D_{\rho,d_4}=\{1,2,3,4,7\}$. In total we have accessed and transferred 35 symbols. We next check if the number of linearly independent equations is equal to 18. Since we have only transferred 15 $p_{i,j},$ the condition is not fulfilled.
So we need to read 3 more $p_{i,j}$ that have not been read previously. In this case, we transfer $p_{5,1}, p_{5,2}, p_{6,1}$ and $a_{i,j}$ elements from the 5-th row in the parity arrays $P_1$ and $P_2$ and from the 6-th row in the parity array $P_1$ that have not been transferred in Step 2. The total number of symbols read to repair $d_1$ and $d_4$ is 46. There are in total $\binom{3}{1}\binom{3}{1}$ pairs of failed nodes where 46 symbols are needed to repair from double failures.

The average repair bandwidth to repair any 2 failed nodes is 4.933 symbols that is 17.783\% reduction compared to a $(9, 6)$ RS code.

\section{Optimizing I/O During Repair}\label{ioOpt}

Minimizing the amount of data accessed and transferred might not directly correspond to an optimized I/O unless the data reads are sequential. Motivated by the practical importance of I/O, we optimize HTECs while still retaining their optimality in terms of storage and repair bandwidth.

We first explain what we treat as a sequential and as a random read before discussing further sequential and random reads. Since the amount of data-read and the amount of data-transferred for HTECs are equal, the number of read and transfer operations is the same. Hence we use the terms reads and transfers interchangeably. Whenever there is a seek for data from a new location, the first read is counted as a random read. If the data is read in a contiguous manner, then the second read is counted as a sequential read. For instance, when a seek request is initiated for $a_{1,1}$ from $d_1$ in Fig. \ref{9_6_6_io}, then the number of random reads is 1. If we next read $a_{2,1}, a_{3,1},$ and so forth in a contiguous manner, then the number of sequential reads increases by one for each contiguous access. On the other hand, if we read $a_{3,1}$ after reading $a_{1,1}$ (but not $a_{2,1}$), then the number of random reads becomes 2. Note that reading $a_{6,1}$ and $a_{1,1}$ results in 2 random reads.

The parameter $step$ defines the contiguity of the reads for the codes obtained by Alg. \ref{Encode}.
\begin{proposition}
	The number of random reads for an $(n, k)$ access-optimal HTEC is equal to $n-1$ for $r$ out of $k$ systematic nodes. 
\end{proposition}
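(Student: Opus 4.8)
The plan is to identify the $r$ distinguished systematic nodes as precisely those forming the first group $J_1=\{d_1,\ldots,d_r\}$, and to show that the repair of each such node reads a single contiguous block from every helper, yielding exactly one random read per helper and hence $n-1$ in total.

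First I would invoke the behaviour of Alg.~4 in the access-optimal regime $\alpha=r^{\lceil\sfrac{k}{r}\rceil}$. As noted after that algorithm, here a set of exactly $\lceil\sfrac{\alpha}{r}\rceil$ symbols is read from each of the $n-1$ helpers and Step~4 requires no additional elements. Combining this with the form of Steps~1 and~3, the row indices physically accessed from \emph{every} helper — the $k-1$ surviving systematic nodes, $p_1$, and $p_2,\ldots,p_r$ — are exactly the indices $i\in D_{\rho,d_l}$: Step~1 already supplies $\lceil\sfrac{\alpha}{r}\rceil$ rows per systematic helper and per $p_1$, Step~3 supplies the same rows from $p_2,\ldots,p_r$, and access-optimality forces Step~4 to add nothing. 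Since each helper is a distinct device, a new seek is charged once per maximal contiguous run of $D_{\rho,d_l}$ in the natural storage order $1,2,\ldots,\alpha$, so the total number of random reads equals $(n-1)$ times the number of contiguous segments of $D_{\rho,d_l}$.

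Next I would compute $D_{\rho,d_l}$ for $d_l\in J_1$. For the first group $\nu=1$, so Alg.~2 sets $run=\lceil\sfrac{\alpha}{r^1}\rceil=\sfrac{\alpha}{r}$ and $step=\lceil\sfrac{\alpha}{r}\rceil-run=0$, using that $r\mid\alpha$ when $\alpha=r^{\lceil\sfrac{k}{r}\rceil}$. With $step=0$ and $run=\sfrac{\alpha}{r}$, the valid partition returned by Alg.~3 consists of the blocks $\{1,\ldots,\sfrac{\alpha}{r}\},\{\sfrac{\alpha}{r}+1,\ldots,\sfrac{2\alpha}{r}\},\ldots$, each a single run of $\sfrac{\alpha}{r}$ consecutive indices. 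Hence every $D_{\rho,d_l}$ with $d_l\in J_1$ is one contiguous segment, so the repair reads exactly one contiguous block from each of the $n-1$ helpers, giving $(n-1)\cdot 1=n-1$ random reads. As $|J_1|=r$, this holds for $r$ out of the $k$ systematic nodes, which is the claim.

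The main obstacle will be justifying cleanly the step that the \emph{same} row-index set $D_{\rho,d_l}$ is what is physically accessed from all $n-1$ helpers in the access-optimal case, so that contiguity of $D_{\rho,d_l}$ translates into a single sequential block per helper. This hinges on the access-optimality property that Step~4 reads no extra elements, which I would argue directly from the structure of Steps~1 and~3 of Alg.~4 together with the fully symmetric index distribution guaranteed for $\alpha=r^{\lceil\sfrac{k}{r}\rceil}$; everything else is the short counting argument above relating contiguous runs to random reads.
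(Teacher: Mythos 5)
Your proposal is correct and follows essentially the same route as the paper's own proof: identify the distinguished $r$ nodes as the first group $J_1$, observe that for $J_1$ Alg.~2 gives $run=\sfrac{\alpha}{r}$ and $step=0$ so each $D_{\rho,d_l}$ is a single contiguous block, and conclude that repair incurs exactly one seek per each of the $n-1$ helpers. Your write-up is more detailed than the paper's (which states the $step=0$ contiguity argument in two sentences), in particular in justifying via Steps~1, 3, and~4 of Alg.~4 that the same row set $D_{\rho,d_l}$ is what is physically read from every helper in the access-optimal case, but the underlying argument is identical.
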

\begin{proof}
	\ When repairing a single systematic node with a $(n, k)$ HTEC for $\alpha=r^{\lceil \sfrac{k}{r} \rceil}$, then data from all $n-1$ helpers has to be accessed and transferred. The set of $k$ systematic nodes is partitioned in $\lceil\sfrac{k}{r}\rceil$ disjunctive subsets of $r$ nodes (the last subset may have less than $r$ nodes). For the first group of $r$ nodes in $J_1$, $step$ is equal to $0$, and hence the reads are sequential. There are in total $n-1$ seeks to read the data in a contiguous manner from $n-1$ helpers.
\end{proof}

\begin{figure}
	\centering
	\includegraphics[width=3.5in]{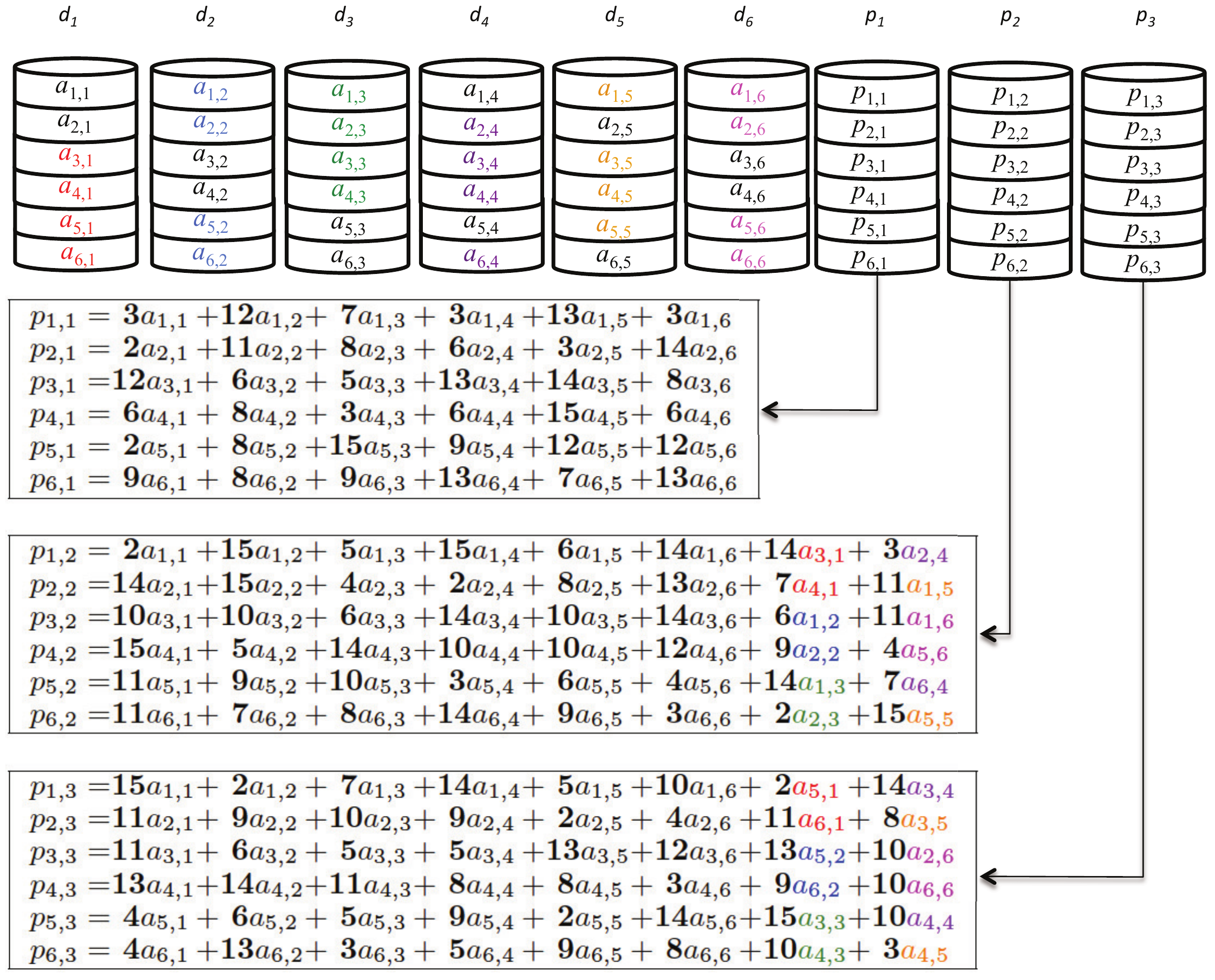}
	\vspace{-0.5cm}
	\caption{An I/O optimized  MDS  array code with 6 systematic and 3 parity nodes for $\alpha=6$. The elements presented in colors are scheduled as additional elements in $p_2$ and $p_3$. The coefficients are from $\mathbf{F}_{16}$ with irreducible polynomial $x^4+x^3+1$.}
	\vspace{-0.4cm}
	\label{9_6_6_io}
\end{figure}
Let us consider that the file size is 54MB and each node stores 9MB. Each I/O reads and transfers 512KB. When repairing a failed systematic node with a $(9,6)$ RS code, 6 out of 8 non-failed nodes have to be accessed. There are in total 6 random reads to recover 1 node. 
Since each I/O transfers 512KB and with the RS code the whole data of 9MB stored in a node is read and transferred, then there are 18 I/O transfers of 512KB from each node where the first I/O is random and the next 17 I/Os are sequential. Thus, the number of I/Os of 512KB is 108 where 6 are random and 6$\times$17=102 are sequential.

We next revisit the example from Section \ref{explanation} where $\alpha=6$. 
In that example, the data in each node is divided into blocks of 9MB$/$6=1.5MB. 
The average number of random reads for recovery of one systematic node is 13.33. Since the block size is 1.5MB and each random I/O transfers 512KB, then each random read is accompanied with 2 sequential reads. In addition, there are 5.33 sequential reads of blocks resulting into 5.33$\times$3 reads of 512KB.
In average there are 13.33 random I/Os and 42.66 sequential I/Os when reconstructing a lost systematic node, i.e., there are in total 56 I/Os.

%For that example, the total number of sequential reads for repairing all systematic nodes is 32, while the total number of random reads is 80. So the average number of reads is 19.

The scheduling of the indexes by Alg. 2 ensures a gradual increase in the number of random reads, therefore there is no need for additional algorithms such as hop-and-couple \cite{188424} to make the reads sequential. If we want to further optimize the code in terms of I/O, then we can apply the hill climbing technique presented in Alg. 6.

\begin{algorithm}
	\caption{I/O optimization of a $(n,k)$ code  
		\newline
		\textbf{Input}: A $(n, k)$ code generated with Alg. \ref{HighLevel};
		\newline
		\textbf{Output}: An $(n, k)$ I/O optimized code.}
	\label{i_o}
	\begin{algorithmic}[1]
		%\Procedure{%Access all elements from the $r^{j-1}$ rows of the non-failed systematic nodes where the elements from the $r^{j-1}$ are written}{}
		%\For{$j=1$ to $k \setminus l$}
		\State Find a $(n, k)$ MDS erasure code where the parity nodes are generated with Alg. \ref{HighLevel};
		\State Repeatedly improve the solution by searching for codes with low I/O for single systematic node failure with same repair bandwidth, until no more improvements are necessary/possible.
	\end{algorithmic}
\end{algorithm}

With the help of Alg. 6, we reduce the number of random reads while still providing the same average repair bandwidth.
The code given in Fig. \ref{9_6_6_io} is a good example of an $(9,6)$ I/O optimized code for $\alpha=6$.
For this code construction the number of random I/Os is reduced to 11.33, while the number of sequential I/Os becomes 44.66.
This is achieved by using different $\mathcal{D}_{d_j}$ for the nodes $d_4, d_5,$ and $d_6$ from $J_2$. Namely, we obtained the following sets $D_{\rho,d_4}=\{1,5\}$, $D_{\rho,d_5}=\{2,6\}$, and $D_{\rho,d_6}=\{3,4\}$.
For instance, the ratio between the number of random I/Os and total number of I/Os for the non-optimized version of the code is 0.238, while it is 0.202 for the optimized version.
This is an important improvement since in practice random reads are more expensive compared to sequential reads.

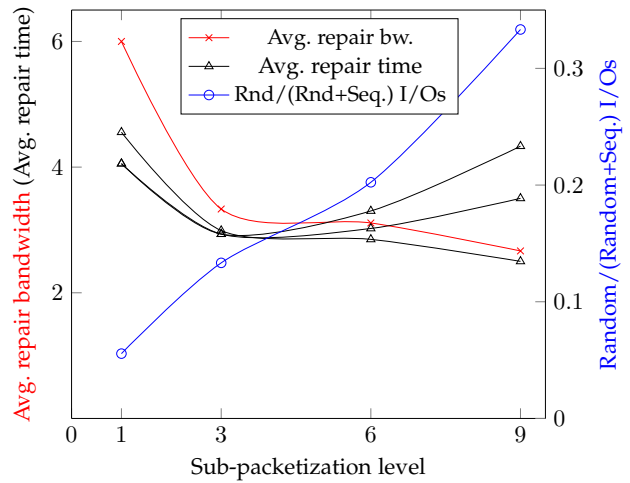
\begin{figure}
	\centering
	\begin{tikzpicture}[scale = 0.9]
	\pgfplotsset{
		scale only axis,
		%scaled x ticks=base 10:3,
		xmin=0, xmax=9.5,
		xtick={0,1,3,6,9},
	}
	
	\begin{axis}[legend style={at={(0.23,0.98)},anchor=north west},
	axis y line*=left,
	ymin=0, ymax=6.5,
	ytick={2,4,6},
	xlabel=Sub-packetization level,
	ylabel={\color{red} Avg. repair bandwidth} (Avg. repair time),
	]
	\addplot[smooth,mark=x,red]
	coordinates{
		(1,6)
		(3,3.333)
		(6,3.111)
		(9,2.666)
	}; \label{plot_one}
	\addplot[smooth,mark=triangle,black]
	coordinates{
		(1,4.0555556)
		(3,2.93333)
		(6,3.302375)
		(9,4.333333)
	}; 
	\addplot[smooth,mark=triangle,black]
	coordinates{
		(1,4.0555556)
		(3,2.93333)
		(6,3.02375)
		(9,3.50333)
	};	
	\addplot[smooth,mark=triangle,black]
	coordinates{
		(1,4.555556)
		(3,2.99)
		(6,2.85)
		(9,2.5)
	};	\label{plot_two}
	%\addlegendentry{Avg. Repair BW}
	\end{axis}
	
	\begin{axis}[legend style={at={(0.23,0.98)},anchor=north west},
	axis y line*=right,
	axis x line=none,
	ymin=0, ymax=0.35,
	ylabel={\color{blue} Random/(Random+Seq.) I/Os},
	] 
	\addlegendimage{/pgfplots/refstyle=plot_one}\addlegendentry{\small Avg. repair bw.}
	\addlegendimage{/pgfplots/refstyle=plot_two}\addlegendentry{Avg. repair time}
	\addplot[smooth,mark=o,blue]
	coordinates{
		(1,0.0555556)
		(3,0.133333)
		(6,0.202375)
		(9,0.333333)
	}; \addlegendentry{\small Rnd/(Rnd+Seq.) I/Os}
	\end{axis}
	
	\end{tikzpicture}
	\vspace{-0.3cm}
	\caption{Average repair bandwidth and normalized number of random I/Os for recovery of the systematic nodes for a $(9,6)$ code for different sub-packetization levels. The black U-shape parabolic curves are the expected curves for the average repair time for one systematic failure.}
	\label{9_6_io_parabolic}
	\vspace{-0.3cm}
\end{figure}

Fig. \ref{9_6_io_parabolic} shows the relation between the average repair bandwidth (we consider that $\frac{M}{k}=1$), the average repair time, and the normalized number of random reads for a single systematic failure with the sub-packetization level for a (9, 6) code.
We observe that for RS code where $\alpha=1$, the average repair bandwidth is biggest (the highest point on the red line with a value equal to 6), while the randomness in the I/Os is the lowest.
%When the sub-packetization level is 1, the average repair bandwidth is equal to the file size for a RS code.
The repair bandwidth decreases as $\alpha$ increases and the minimum bandwidth of 2.67 is achieved for $\alpha=9$.
The situation is completely opposite when the metric of interest is the number of random reads. The number of reads (especially random reads) increases rapidly with $\alpha$ as shown in Fig. \ref{9_6_io_parabolic}. The best overall system performance is achieved for $\alpha$ in the range between 3 and 6.
This claim can be further clarified by the following reasoning: In terms of the total average time for a successful repair of one lost node, a higher average repair bandwidth means a higher average repair time. Similarly, a higher number of I/Os means a higher average repair time. Due to the opposite growth and drop trends of the curves for the average repair bandwidth and the normalized number of reads, we should expect U-shape parabolic curves for the total average time for a recovery of one node as the curves presented in black in Fig. \ref{9_6_io_parabolic}. In practical implementations, the concrete shape of the U-curve depends on additional parameters such as the speed of the disks, the amount and the speed of local disk cashes, the read latency, and the size of stored data blocks as we show in Section \ref{anal}.

\section{Experiments with HTECs in Hadoop}\label{anal}
To verify the performance of HTECs we implemented them in C/C++ and used them in Hadoop Distributed File System (HDFS). 
%Hadoop is one of the most popular infrastructures for distributed computing \cite{white2012hadoop} and from the release 3.0.0-alpha2 it offers several erasure codes such as Reed-Solomon $(9, 6)$ and $(14, 10)$. 
Hadoop is an open-source software framework used for distributed storage and processing of big data sets \cite{white2012hadoop}. From release 3.0.0-alpha2 Hadoop offers several erasure codes such as $(9, 6)$ and $(14, 10)$ RS codes.

All tests were run on publicly available Amazon EC2 instances running the default Ubuntu 64bit image. Hadoop 3.0.0-alpha2 was downloaded and installed on each machine. The erasure coding modules of HDFS were modified to support the HTEC C/C++ library. 
We used one namenode, nine data nodes, and one client node. All nodes had a size of 50GB and were connected with a local network of 10Gbps. The nodes were running on Linux machines equipped with Intel Xeon E5-2676 v3 running on 2.4GHz. Two crucial parameters in Hadoop are split size and block size (we refer an interested reader to \cite{white2012hadoop}). We have experimented with different block sizes (90MB and 360MB), different split sizes (512KB, 1MB and 4MB) and different sub-packetization levels ($\alpha=1, 3, 6,$ and $9$) in order to check how they affect the repair time of one lost node. The measured times to recover one node are presented in Fig. \ref{9_6_Hadoop01}. Note that $\alpha=1$ represents the RS code that is available in HDFS, while for $\alpha = 3, 6, 9$ the codes are HTECs defined in this paper. In all measurements HTECs outperform RS. The effect of the U-curves discussed in Fig. \ref{9_6_io_parabolic} is apparent for smaller split sizes, and as the split sizes increase, the disadvantage of bigger number of I/Os due to the increased sub-packetization diminishes, and the repair time decreases further. 
%{\bf \color{red} *** It would be excellent confirmation of our claims if Kjetil can provide us measurements with the same  (9,6) codes but with smaller split sizes like 256KB, 128 KB, 64 KB ***}
\begin{figure}[h!]
	%\centering
	\includegraphics[width=3.25in]{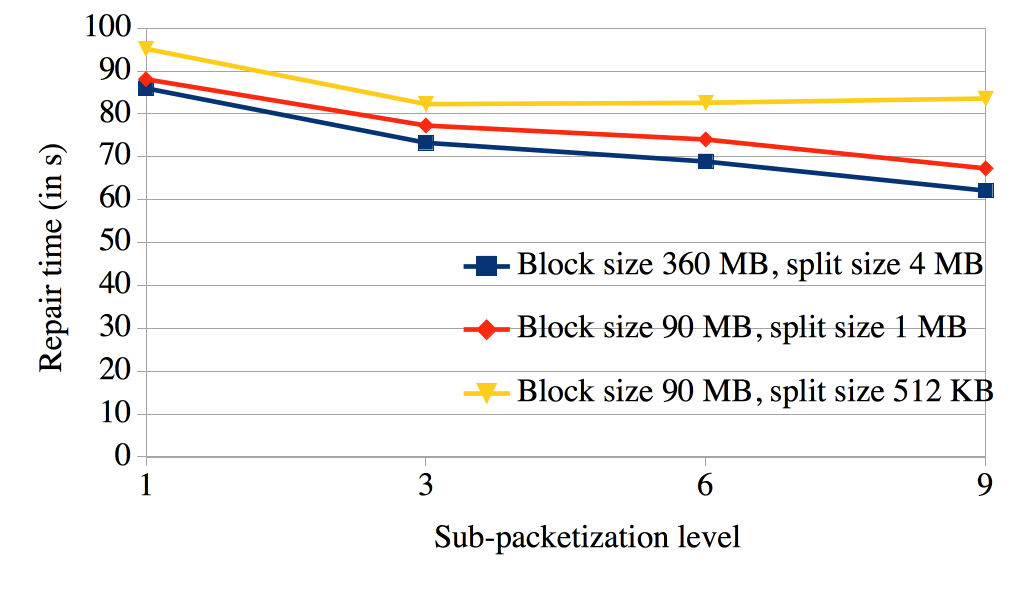}
	\vspace{-0.5cm}
	\caption{Time to repair one lost node of 50GB with a $(9, 6)$ code for different sub-packetization levels $\alpha$. Note that the RS code for $\alpha=1$ is available in the latest release 3.0.0-alpha2 of Apache Hadoop.}
	\vspace{-0.4cm}
	\label{9_6_Hadoop01}
\end{figure}

In Fig. \ref{9_6_Hadoop02}, we compare the repair times for one lost node of 50GB with codes that are directly obtained with Alg. 2, and I/O optimized codes obtained with Alg. 6. In almost all cases there is a small improvement (shorter repair time) with the I/O optimized codes.
\begin{figure}[h!]
	%\centering
	\includegraphics[width=3.3in]{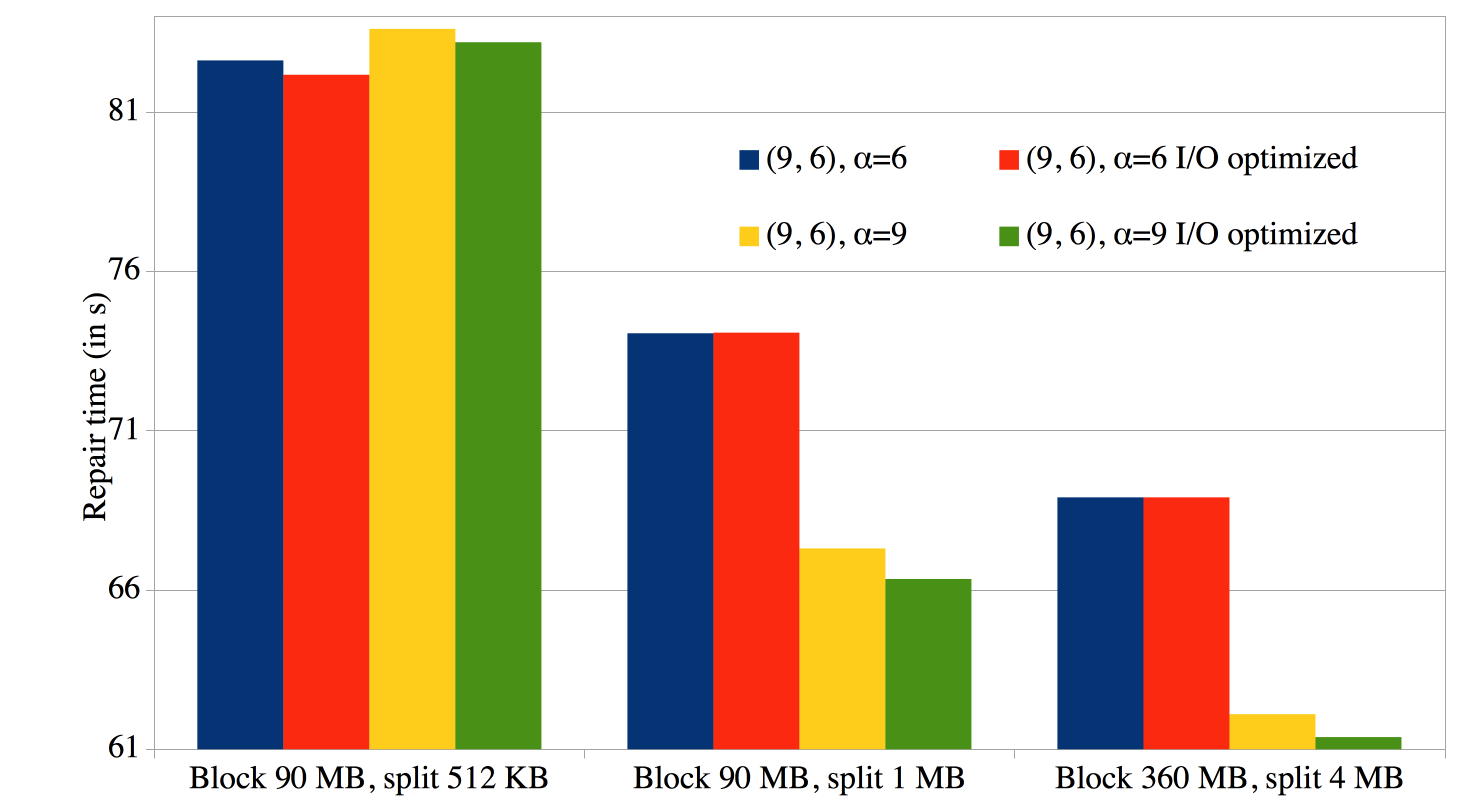}
	\vspace{-0.4cm}
	\caption{Comparison of repair times for one lost node of 50GB for a $(9, 6)$ code produced with Alg. 2 and an $(9, 6)$ I/O optimized code produced with Alg. 6 for sub-packetization levels equal to 6 and 9.}
	\label{9_6_Hadoop02}
	\vspace{-0.4cm}
\end{figure}

\subsection{Comparison of HashTag Codes with Other Codes} 
The performance of HTECs is further investigated in comparison with representative codes from the literature.
We first compare the average data that is both read and downloaded during a repair of a single systematic node for different code parameters with HTEC and Piggyback constructions \cite{6620242}. %Fig. \ref{piggy} shows a comparison between the data read and transferred when repairing a single systematic node with Piggyback 1, Piggyback 2, and HTECs. 
The plot in Fig. \ref{piggy} corresponds to a sub-packetization level equal to 8 in Piggyback 1 and HTEC, and $4(2r-3)$ in Piggyback 2.
We observe that HTEC construction requires less data read and less data transferred compared to Piggyback 1 and Piggyback 2 even though the sub-packetization level for the HTEC construction is smaller than the one in Piggyback 2.

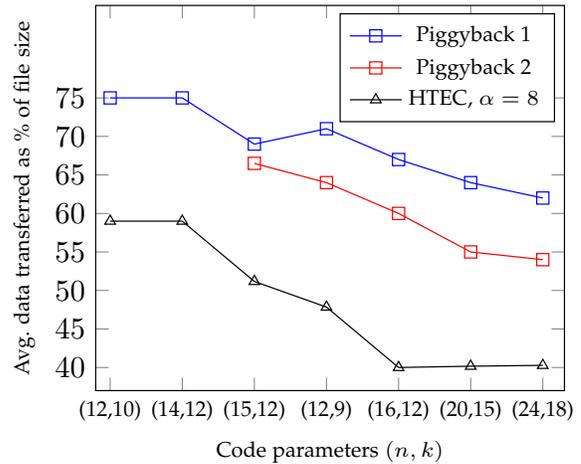
\begin{figure}
	\centering
	\begin{tikzpicture}[scale=1.15]
	\begin{axis}[
	xlabel=\text{\scriptsize Code parameters $(n, k)$},
	ylabel={\scriptsize Avg. data transferred as \% of file size},
	xmin=-0.2, xmax=6.3,
	ymin=37, ymax=87,
	xtick={0,1,2,3,4,5,6},
	ytick={40, 45, 50, 55, 60, 65, 70, 75},
	xticklabels={\scriptsize (12$\text{,}$10), \scriptsize (14$\text{,}$12), \scriptsize (15$\text{,}$12), \scriptsize (12$\text{,}$9), \scriptsize (16$\text{,}$12), \scriptsize (20$\text{,}$15), \scriptsize (24$\text{,}$18)}
	]
	\addplot[
	color=blue,
	mark=square,
	]
	coordinates {
		(0, 75)(1, 75)(2, 69)(3, 71)(4, 67)(5, 64)(6, 62)
	};
	
	\addplot[
	color=red,
	mark=square,
	]
	coordinates {
		(2, 66.5)(3, 64)(4, 60)(5, 55)(6, 54)
	};
	
	\addplot[
	color=black,
	mark=triangle,
	]
	coordinates {
		(0, 59)(1, 59)(2, 51.17)(3, 47.84)(4, 40)(5, 40.1667)(6, 40.27778)
	};
	\legend{\scriptsize Piggyback 1\\ \scriptsize Piggyback 2\\ \scriptsize HTEC, $\alpha=8$\\}
	
	\end{axis}
	\end{tikzpicture}
	\vspace{-0.5cm}
	\caption{Average data read and transferred for repair of a single systematic node with Piggyback 1 for $\alpha=8$, Piggyback 2 for $\alpha=4(2r-3)$ and HTEC for $\alpha=8$.}
	\label{piggy}
\end{figure}

%Fig. \ref{xxx} shows the average repair bandwidth for single failure up to $r-1$ failures for different codes.

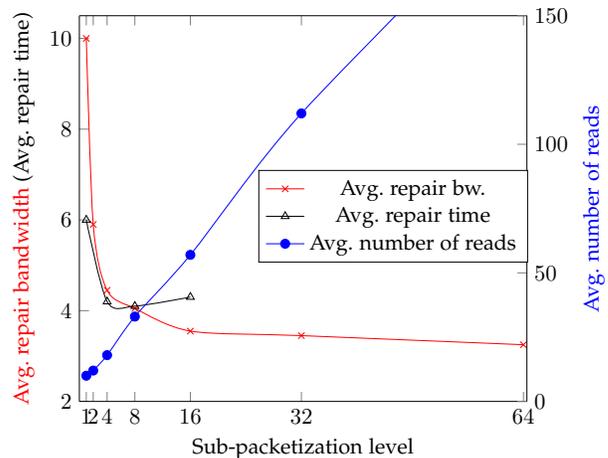
\begin{figure}
	\centering
	\begin{tikzpicture}[scale = 0.85]
	\pgfplotsset{
		scale only axis,
		%scaled x ticks=base 10:3,
		xmin=0, xmax=64.5,
		xtick={1,2,4,8,16,32,64},
	}
	
	\begin{axis}[legend style={at={(0.4,0.6)},anchor=north west},
	axis y line*=left,
	ymin=2, ymax=10.5,
	ytick={2,4,6,8,10},
	xlabel=Sub-packetization level,
	ylabel={\color{red} Avg. repair bandwidth \color{black} (Avg. repair time)},
	]
	\addplot[smooth,mark=x,red]
	coordinates{
		(1, 10)
		(2, 5.9)
		(4, 4.45)
		(8, 4.05)
		(16, 3.55)
		(32, 3.45)
		(64, 3.25)
	}; \label{plot_one}
	\addplot[smooth,mark=triangle,black]
	coordinates{
		(1,6.0)
		(4,4.2)
		(8,4.1)
	   (16,4.3)
	}; \label{plot_two}
	%\addlegendentry{Avg. Repair BW}
	\end{axis}
	
	\begin{axis}[legend style={at={(0.4,0.6)},anchor=north west},
	axis y line*=right,
	axis x line=none,
	ymin=0, ymax=150,
	ylabel={\color{blue} Avg. number of reads},
	] 
	\addlegendimage{/pgfplots/refstyle=plot_one}\addlegendentry{Avg. repair bw.}
	\addlegendimage{/pgfplots/refstyle=plot_two}\addlegendentry{Avg. repair time}
	\addplot[smooth,mark=*,blue]
	coordinates{
		(1, 10)
		(2, 12)
		(4, 18)
		(8, 33)
		(16, 57)
		(32, 112)
		(64, 200)
	}; \addlegendentry{Avg. number of reads}
	\end{axis}
	\end{tikzpicture}
	\vspace{-0.5cm}
	\caption{Average repair bandwidth for the systematic nodes and average number of reads (sequential+random reads) for recovery of the systematic nodes for a $(14,10)$ code for different sub-packetization levels. The black U-shape parabolic curve is the expected curve for the average repair time for one lost node.}
	\label{perf}
	\vspace{-0.3cm}
\end{figure}

Fig. \ref{perf} shows the relation between the average repair bandwidth (we consider that $\frac{M}{k}=1$) for a single failure, the average repair time, the average number of reads, and the sub-packetization level for a (14, 10) code.
For $\alpha=1$, we have a conventional RS code and the average repair bandwidth is equal to $k$ (the highest point on the red line with value 10).
%When the sub-packetization level is 1, the average repair bandwidth is equal to the file size for a RS code.
A Hitchhiker code for $\alpha=2$ reduces the repair bandwidth by 35$\%$ compared to the RS code as it is reported in \cite{Rashmi:2014:HGF:2619239.2626325}, and the repair bandwidth is even further reduced by 41$\%$ with a $(14, 10)$ HTEC for $\alpha=2$. The remaining values of the average repair bandwidth are for HTECs for $\alpha$ = 4, 8, 16, 32, and 64. 
%Note that an $(14, 10)$ code cannot be constructed with the algorithm in \cite{7084873}, because $\frac{k}{r}=2.5$ is a non-integer. Moreover, all presented constructions in the literature except the one in \cite{7463553} do not construct codes for arbitrary sub-packetization levels.
We observe that the lowest repair bandwidth that is 3.25 is achieved for $\alpha=r^{\lceil \sfrac{k}{r}\rceil}=64$. 
On the other hand, the highest number of reads is for $\alpha=64$. 
Typically, an engineering decision would end up choosing values for $\alpha$ in the range between 4 and 16 for optimal overall system performance in terms of the average repair time. 

That is illustrated in the next two figures: Fig. \ref{14_10_Hadoop01} and Fig. \ref{16_12_Hadoop01}. Fig. \ref{14_10_Hadoop01} presents the measured times to recover one node with a $(14, 10)$ code, and Fig. \ref{16_12_Hadoop01} presents the measured times to recover one node with a $(16, 12)$ code. The block sizes are between 128MB and 2048MB, and the split sizes are 1MB, 4MB, 64MB, and 128MB. %In Fig. \ref{14_10_Hadoop01}, the sub-packetization level $\alpha=1$ represents a $(14, 10)$ RS code that comes with HDFS, while for other $\alpha = 4, 8, 16$ the codes are HashTag codes defined in this paper. 
The repair times are always better (i.e. lower) with HTECs compared to RS, and the repair times start to increase after a certain threshold for the sub-packetization level ($\alpha=8$ for the $(14, 10)$ code and $\alpha=16$ for the $(16, 12)$ code). This effect is more visible for small split sizes.
\begin{figure}[h!]
	%\centering
	\includegraphics[width=3.25in]{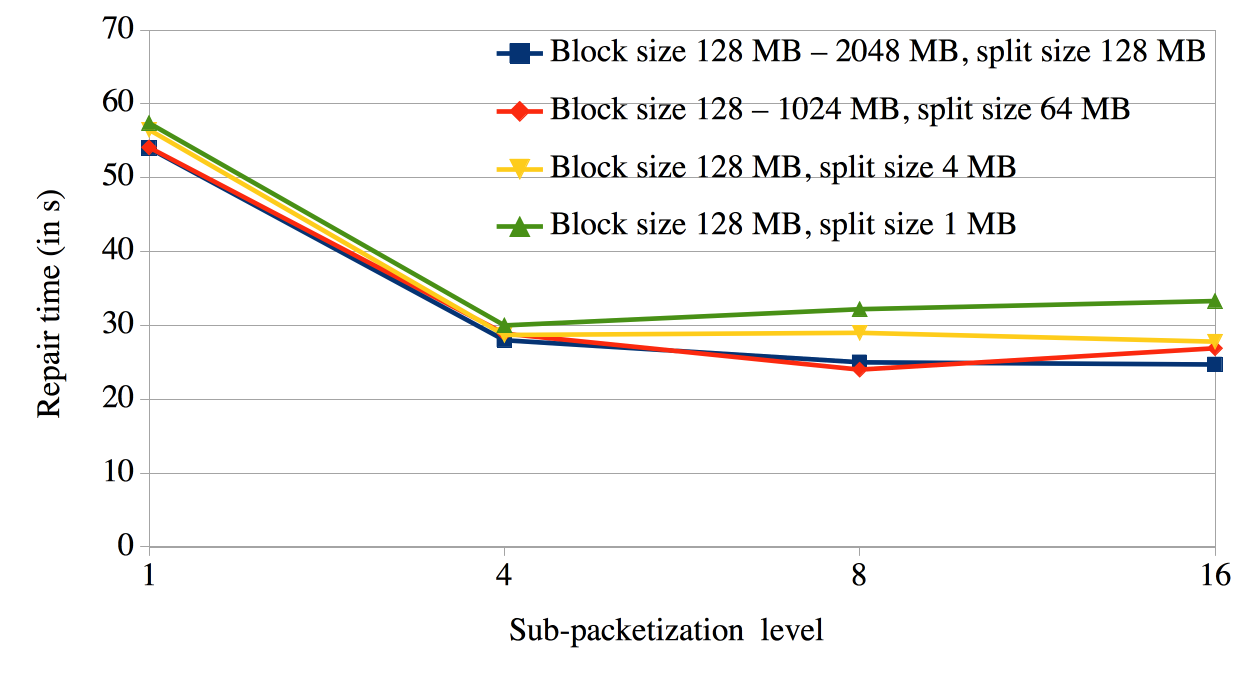}
	\vspace{-0.5cm}
	\caption{Time to repair one lost node of 50GB with a $(14,10)$ code for different sub-packetization levels $\alpha$.}
	\label{14_10_Hadoop01}
\end{figure}

\begin{figure}[h!]
%\centering
\includegraphics[width=3.25in]{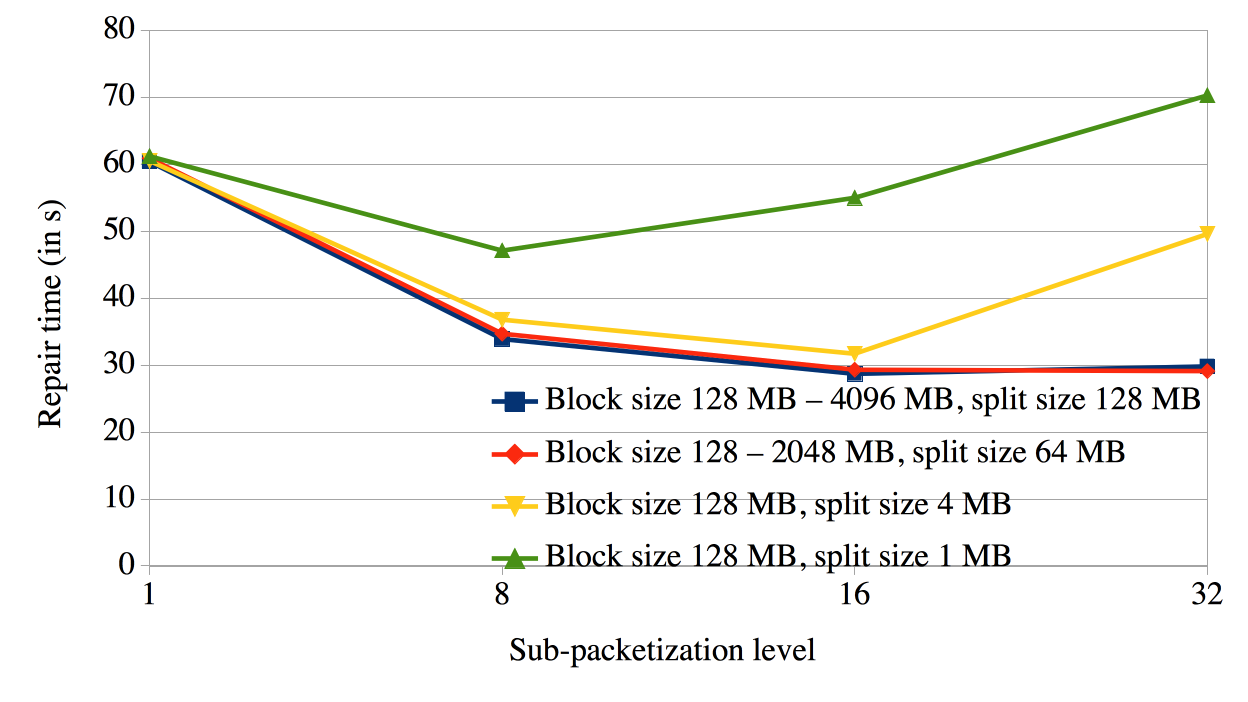}
\vspace{-0.5cm}
\caption{Time to repair one lost node of 50GB with a $(16,12)$ code for different sub-packetization levels $\alpha$. Note that the repair time increases for higher values of the sub-packetization level ($\alpha=16$ and $\alpha=32$) when the split sizes are small (green and yellow curve).}
\label{16_12_Hadoop01}
\end{figure}

\section{Discussion}\label{dis}
In this Section, we discuss some open issues that are not covered in this paper.

\emph{Lower bound of the finite field size.} In this paper, we use the work from \cite{7084873} to guarantee the existence of non-zero coefficients from $\mathbf{F}_{q}$ so that the code is MDS. However, the lower bound of the size of the finite field is relatively big. On the other hand, in all examples in this paper we actually work with very small finite fields ($\mathbf{F}_{16}$ and $\mathbf{F}_{32}$). 
Recent results in \cite{DBLP:journals/corr/YeB16a} showed that a code is access-optimal for $\alpha=r^{\lceil \sfrac{k}{r} \rceil}$ over any finite field $\mathbf{F}$ as long as $|\mathbf{F}| \geq r \Bigl\lceil \frac{k}{r} \Bigr\rceil$.
Determining the lower bound of the size of the finite field for HTECs remains an open problem.

\emph{Efficient repair of the parity nodes.} HTEC construction considers only an efficient repair of the systematic nodes. Several high-rate MSR codes for efficient repair of both systematic and parity nodes \cite{6120327,7282816,DBLP:journals/corr/YeB16} exist in the literature. Still for these codes, either the sub-packetization level is too large or the constructions are not explicit.
An open issue is how to extend the HTEC construction to support an efficient repair of the parity nodes as well.

\emph{Optimality in terms of I/O.} We use hill climbing technique in Alg. \ref{i_o} to optimize HTECs for the I/O. Finding HTECs that provably have the minimum I/O is an open optimization problem. 

%Measurements from Facebook's data-warehouse cluster \cite{179255} reveal that single failure is a dominant scenario ($98.08\%$), while the percentage of double failures is $1.87\%$.

\section{Conclusions}\label{conc}
MSR codes have been proposed as a superior alternative to popular RS codes in terms of minimizing the repair bandwidth. 
In this paper, we presented \emph{HashTag Erasure Code (HTEC)} construction that provides the flexibility of constructing MDS codes for any code parameters including an arbitrary sub-packetization level. MSR codes are constructed when the sub-packetization level of HTECs is equal to $r^{\lceil \sfrac{k}{r}\rceil}$. In this case, HTECs are access-optimal codes.

In this work we showed that when implemented in practical distributed storage systems such as in Hadoop, HTECs can provide the system designers great  flexibility in terms of selecting various code parameters such as the rate of the code, the size of the blocks and splits of the files, and the values of $\alpha$. Moreover, having in mind that the existing MDS erasure code constructions do not address the critical problem of I/O optimization, HTECs offer the possibility to further optimize the disk I/O consumed while simultaneously providing optimality in terms of storage, reliability, and repair-bandwidth. 
All these properties of HTECs offer the possibility to choose codes with parameters that give the best overall system performance.

Additionally, we show that HTECs reduce the repair bandwidth for more than one failure. HTECs are the first high-rate MDS codes theoretically constructed or implemented in practice, that offer significant improvements over RS codes in case of multiple failures.

\section*{Acknowledgements}\label{acknowledgemets}

We would like to thank Kjetil Babington for his practical insights. %We would like to thank anonymous reviewers for their useful comments on the first submission of this paper. 

\bibliographystyle{IEEEtran}
\bibliography{refer}

\vspace{-0.4cm}
\begin{IEEEbiography}[{\includegraphics[width=1in,height=1.25in,clip,keepaspectratio]{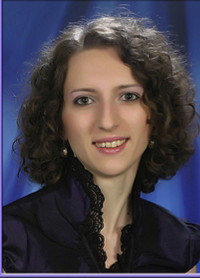}}]%
	{Katina Kralevska} is a postdoctoral researcher at the Department of Information Security and Communication Technology, NTNU. She was awarded a Ph.D. in December 2016 from NTNU. She received her B.Sc. degree in 2010 and her M.Sc. degree in 2012 in Telecommunications from Ss. Cyril and Methodius University-Skopje, Macedonia. Her research interests include applied erasure coding in networks and distributed storage systems.
\end{IEEEbiography}
\vspace{-0.4cm}
\begin{IEEEbiography}[{\includegraphics[width=1in,height=1.25in,clip,keepaspectratio]{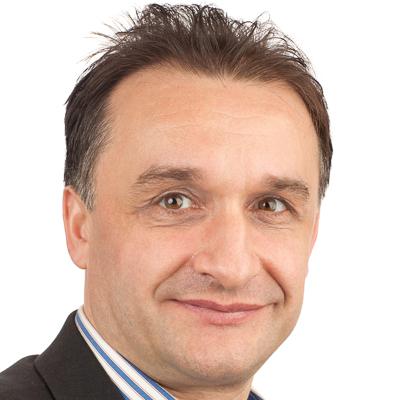}}]%
	{Danilo Gligorovski}
	 is a Professor of Information Security and Cryptography at NTNU. His main research interests are in Cryptography, Information security and Coding Theory, especially in Ultra Fast Public Key Algorithms, Post-Quantum Cryptography (multivariate and code-based), Hash functions, Fast Symmetric Cryptographic Algorithms and Erasure Codes for Distributed Storage Systems. He is an author of more than 170 scientific publications. 	
\end{IEEEbiography}
\vspace{-0.4cm}
\begin{IEEEbiography}[{\includegraphics[width=1in,height=1.25in,clip,keepaspectratio]{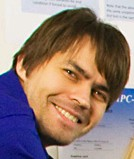}}]%
	{Rune E. Jensen}
	received his BSc degree in 2006 and his MSc degree in 2009 in Computer Science from Norwegian University of Science and Technology.
	He is currently working towards a PhD degree in optimization techniques for compute intensive applications at the Dept. of Computer and Information Science, Norwegian University of Science and Technology. 
	His interests are algorithmic and low level optimization in modern processors.
\end{IEEEbiography}
\vspace{-0.4cm}
\begin{IEEEbiography}[{\includegraphics[width=1in,height=1.25in,clip,keepaspectratio]{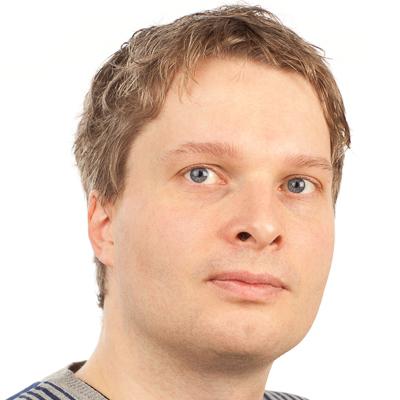}}]%
	{Harald {\O}verby}
	is a Professor at NTNU.
	He received his Msc in Computer Science in 2002, a BSc in Economics in 2003, and a PhD in Information and Communication Technology in 2005, all from NTNU. %In his PhD work, he focused on teletraffic analysis and quality of service aspects of optical packet switched networks. 
	He has held different administrative and academic positions at NTNU: Post.Doc (2005-2006), Research and Education Coordinator (2006-2010), and Associate Professor (2010-2016). His main research interests include digital economics, optical networking and secure and dependable communication systems.
\end{IEEEbiography}

\end{document}